\newcommand{\complain}[1]{}
\def\Reals{\ensuremath{\mathbb{R}}}
\newtheorem{definition}{Definition}
\newtheorem{lemma}[definition]{Lemma}
\newtheorem{proposition}[definition]{Proposition}
\newtheorem{theorem}[definition]{Theorem}
\newcommand{\eps}{\varepsilon}
\newcommand{\cclass}[1]{{#1}}
\newcommand{\problem}[1]{\textsc{#1}}
\newcommand{\co}{\colon\thinspace}
\newcommand{\fpt}{\cclass{FPT}}
\newcommand{\wone}{\cclass{W[1]}}
\newcommand{\clique}{\problem{Clique}}
\newcommand{\ap}{a.p.\ }
\renewcommand{\ap}{axis-parallel\ }
\newcommand{\wrt}{w.r.t.\ }
\renewcommand{\wrt}{with respect to\ }
\newcommand{\R}{\mathbb{R}}
\newcommand{\Rtwo}{\mathbb{R}^2}
\newcommand{\Rd}{\mathbb{R}^d}
\newcommand{\N}{\mathbb{N}}
\newcommand{\order}{\mathcal{O}}
\newcommand{\calS}{\mathcal{S}}
\newcommand{\calD}{\mathcal{D}}
\newcommand{\calR}{\mathcal{R}}
\newcommand{\calW}{\mathcal{W}}
\newcommand{\calU}{\mathcal{U}}
\newcommand{\pr}{\operatorname{pr}}
\newcommand{\dist}{\operatorname{dist}}
\newcommand{\rep}{\operatorname{rep}}
\newcommand{\BB}{\operatorname{BB}}
\newcommand{\diam}{\operatorname{diam}}
\newcommand{\follows}{\Rightarrow}
\newcommand{\rup}[1]{ \left\lceil  #1 \right\rceil}
\title{Fixed-parameter tractability and lower bounds\\ for stabbing problems}
\author{ Panos Giannopoulos%
  \thanks{Institut f{\"ur} Informatik, Freie Universit{\"a}t Berlin,
    Takustra{\ss}e 9, D-14195 Berlin, Germany, {\tt\{panos, knauer, rote,
    dwerner\}@inf.fu-berlin.de}.}{\ }\footnote{This research was
    supported by the German Science Foundation (DFG) under grant Kn~591/3-1.}
  \and
  Christian Knauer\footnotemark[1]{\ }\footnotemark[2]
  \and
  G{\"u}nter Rote\footnotemark[1]
  \and
  Daniel Werner\footnotemark[1]
}
\begin{document}
\maketitle

\begin{abstract}
We study the following general stabbing problem from
a parameterized complexity point of view: Given a set $\mathcal S$ of $n$ translates of an
object in $\Rd$, find a set of $k$ lines with the property that every
object in $\mathcal S$ is ''stabbed'' (intersected) by at least one line.

We show that when $S$ consists of axis-parallel unit squares in $\Rtwo$ the (decision) problem of stabbing $S$ with
axis-parallel lines is W[1]-hard with respect to $k$ (and thus, not fixed-parameter tractable
unless FPT=W[1]) while it becomes fixed-parameter tractable when the squares are disjoint.  
We also show that the problem of stabbing a set of disjoint unit squares in $\Rtwo$ with lines of arbitrary directions is 
W[1]--hard with respect to $k$.  
Several generalizations to other types of objects and lines with arbitrary directions are also presented. 
Finally, we show that deciding whether a set of unit balls in $\Rd$ can be stabbed by one line is W[1]--hard with respect to 
the dimension $d$. 

\noindent
  \textbf{Keywords:} \emph{geometric stabbing, minimum enclosing cylinder, 
         lower bounds, fixed-parameter tractability.} 
\end{abstract}



\section{Introduction}
We study several instances of the following general stabbing problem from
a parameterized complexity point of view: Given a set $\mathcal S$ of $n$ translates of an
object in $\Rd$, find a set of $k$ lines with the property that every
object in $\mathcal S$ is ''stabbed'' (intersected) by at least one line. 
Examples include the problem of stabbing a set of axis-parallel squares or circles in
the plane with $k$ lines (possibly axis-parallel), stabbing cubes in space
with $k$ planes, and stabbing unit balls in $\R^{d}$ with one line (the
decision version of the problem of computing the smallest enclosing
cylinder).

All these problems are known to be NP-hard and for most of them only 
polynomial time constant-factor approximation algorithms are known up to date.
We study several such problems from a \emph{parameterized complexity} point of
view: Our goal is to determine if algorithms that run in $O(f(k,d)\cdot
n^{c})$ time on inputs of size $n$ (where $f$ is a computable function
depending only on $k,d$, and $c$ is a constant independent of $k,d,n$) do
exist.

\paragraph{Parameterized Complexity}
We first review some basic definitions of parameterized complexity theory;
see~\cite{DF99,FG06} for an introduction.
A problem with input size $n$ and a positive integer parameter~$k$ is
\emph{fixed-parameter tractable} if it can be solved by an algorithm that
runs in $O(f(k)\cdot n^{c})$ time, where $f$ is a computable function
depending only on $k$, and $c$ is a constant independent of $k$; such an
algorithm is (informally) said to run in fpt-time.  The class of all
fixed-parameter tractable problems is denoted by \fpt. An infinite
hierarchy of classes, the W-hierarchy, has been introduced for establishing
fixed-parameter intractability.  Its first level, \wone, can be thought of
as the parameterized analog of NP: a parameterized problem that is hard for
\wone\ is not in \fpt\ unless \fpt=\wone, which is considered highly
unlikely under standard complexity theoretic assumptions.  Hardness is
sought via an \emph{fpt-reduction}, i.e., a fpt-time many-one reduction
from a problem $\Pi$, parameterized with $k$, to a problem $\Pi'$,
parameterized with $k'$, such that $k'\leq g(k)$ for some computable
function $g$.

\paragraph{Results}

Our results are given by the following theorems listed in the order in which they are proved in the relevant sections.

\begin{theorem}\label{thm:wonehardunitsquaresap} 
Stabbing a set of \ap unit squares in the plane with $k$ \ap lines is \wone--hard with respect to $k$.
\end{theorem}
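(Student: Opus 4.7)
I would reduce from the W[1]-hard problem \rpartiteclique: given an $r$-partite graph $G = V_1 \cup \cdots \cup V_r$ with $|V_i| = n$, decide whether $G$ contains a clique with one vertex per class. In polynomial time I build an instance $(\calS, k)$ of axis-parallel unit-square stabbing with parameter $k = O(r)$ (naturally $k = 2r$) such that $\calS$ can be stabbed by $k$ axis-parallel lines iff $G$ has such a clique.

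\textbf{Construction.} For each class $V_i$ reserve a horizontal strip $H_i$ and a vertical strip $C_i$, pairwise disjoint across classes, so that the plane decomposes into an $r \times r$ grid of blocks $B_{i,i'} = C_i \cap H_{i'}$. Inside each $H_i$ place a ``horizontal forcer''---a long chain of pairwise $x$-disjoint unit squares sharing a common $y$-band---and analogously a ``vertical forcer'' in each $C_i$. With the forcers chosen long enough and $k = 2r$, a pigeonhole argument forces any feasible stabbing to use exactly one horizontal line in every $H_i$ and exactly one vertical line in every $C_i$. The $x$-coordinate of the vertical line in $C_i$ and the $y$-coordinate of the horizontal line in $H_i$ are then used to encode a choice $j_i \in [n]$ of vertex in $V_i$; a small matching gadget inside the diagonal block $B_{i,i}$ forces both lines of class $V_i$ to encode the same vertex. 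In each off-diagonal block $B_{i,i'}$, reachable only by the two lines associated with classes $V_i$ and $V_{i'}$, install an edge gadget whose unit squares are collectively stabbable precisely when $(v^i_{j_i}, v^{i'}_{j_{i'}}) \in E$.

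\textbf{Main obstacle.} The technical core is designing the edge gadget. A unit square has very limited selectivity: if the slot positions are spaced more than $1$ apart, its unit $x$-range contains at most one slot, so the configurations for which a single square is \emph{not} stabbed form an ``anti-cross'' $([n] \setminus \{\alpha\}) \times ([n] \setminus \{\beta\})$ of size $(n-1)^2$, far from the singleton one would need to forbid an individual non-edge. A naive one-square-per-non-edge encoding therefore cannot work. One must either place a polynomial-size constellation of unit squares per class pair whose joint unstabbed configurations cover exactly the non-edges of $G$ restricted to $V_i \cup V_{i'}$---possibly by using densely packed slot positions so that a single unit range can cover an arbitrary \emph{interval} of slots---or first establish W[1]-hardness for a more flexible stabbing problem such as \rectanglestabbing\ or \crocostabbing\ (both introduced elsewhere in the paper) and simulate each of its rectangles by a small group of unit squares. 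I expect the latter, intermediate-problem route to be the cleaner one, matching the role that these stabbing problems play in the rest of the paper.
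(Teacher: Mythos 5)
Your high-level architecture (forcing strips, an $r\times r$ grid of blocks, diagonal matching gadgets, off-diagonal edge gadgets) matches the paper's, and you have correctly isolated the crux. But the proposal stops exactly where the real work begins, and the two escape routes you sketch do not lead anywhere. With only one vertical and one horizontal line per block, a single unit square is unstabbed on an ``anti-cross'' $([n]\setminus I)\times([n]\setminus J)$, where $I,J$ are the (possibly dense) intervals of slots covered by its projections. Every nonempty complement of an interval in $[n]$ contains $1$ or $n$, so every nonempty anti-cross contains a configuration with an extreme coordinate; consequently no union of anti-crosses can equal, say, a non-edge set consisting of a single interior pair. Hence route (a) fails outright for $k=2r$, no matter how the constellation is chosen. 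Route (b) is not developed at all, and the problems \rectanglestabbing\ and \crocostabbing\ are never actually introduced or used in the paper, so there is no intermediate result to lean on.

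The paper's resolution --- the one idea your proposal is missing --- is to send \emph{two} parallel lines through each strip rather than one. Each of the $k$ horizontal and $k$ vertical strips is a \emph{double strip} of width $2n$ (normalising the slot spacing to $1$); $F$-gadgets force one ``negative'' line in its lower (resp.\ left) half and one ``positive'' line in its upper (resp.\ right) half, and a separate consistency gadget ($C$-gadget), consuming one further line per strip, forces the two lines of a pair to represent the same vertex. A single square of side $n-1$ placed at $(i,j)$ inside a $2n\times 2n$ block is then stabbed by the negative vertical line iff its representative exceeds $i$ and by the positive vertical line iff its representative is less than $i$; with consistency enforced, the vertical pair misses it exactly when both represent $i$, and symmetrically for the horizontal pair and $j$. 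So one square per non-edge (plus a diagonal gadget with one square per pair $i\neq j$, playing the role of your matching gadget) gives singleton selectivity, at the price of parameter $6k$ instead of $2r$; a final uniform rescaling makes all squares congruent. The paper also reduces from directed $k$-\clique\ rather than $r$-partite clique, but that difference is cosmetic; without the antipodal-pair-plus-consistency mechanism, however, the reduction does not go through.
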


We prove this by an fpt-reduction from the $k$-\clique\ problem in directed
graphs, which is known to be W[1]-complete~\cite{FG06}. 
This main construction is modified to work for the case when the lines can have arbitrary directions, and 
by replacing the squares with rectangles in a proper way, we get the following theorem:

\begin{theorem}\label{thm:woneharddisjointrectanglesap}  
Stabbing a set of disjoint rectangles in the plane with $k$ lines is \wone--hard with respect to $k$, 
for both cases where the lines are \ap or have arbitrary directions.
\end{theorem}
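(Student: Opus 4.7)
The plan is to adapt the construction used for Theorem~\ref{thm:wonehardunitsquaresap} in two steps: first, replace the overlapping unit squares by pairwise disjoint thin rectangles that preserve the axis-parallel stabbing structure; and second, elongate these rectangles so that no non-axis-parallel line can stab more than a constant number of them, which forces any optimal solution to use (nearly) axis-parallel lines and reduces the arbitrary-direction case to the axis-parallel one.

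Concretely, I would inspect the reduction behind Theorem~\ref{thm:wonehardunitsquaresap} and classify each unit square by the axis-parallel line in the intended solution that is meant to stab it. A square meant to be stabbed by a horizontal line $y=y_0$ is replaced by an axis-parallel rectangle of small height $\eps$ centered on $y=y_0$, and similarly a square meant to be stabbed by a vertical line $x=x_0$ is replaced by a thin vertical rectangle centered on $x=x_0$. Choosing $\eps$ sufficiently small (polynomial in $1/n$) and translating each rectangle slightly along its intended stabbing line, the rectangles become pairwise disjoint, while the intended horizontal and vertical stabbing lines still hit exactly the same subsets as before; hence the reduction from $k$-\clique\ in directed graphs still produces a yes-instance iff the original input does, and the parameter $k$ is unchanged.

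For the case of lines of arbitrary directions, the key modification is to make each of the above rectangles very long in the direction of its intended stabbing line (length $L$, width $\eps$, with $L$ polynomial in $n$ and $\eps$ much smaller than the minimum spacing between intended stabbing lines). Any line whose direction deviates from horizontal by more than some tiny angle $\alpha=\alpha(\eps,L)$ can intersect at most a constant number of these long thin rectangles, and symmetrically for vertical ones; by a counting argument, a stabbing using $k$ such tilted lines cannot cover all rectangles when the construction contains $\Omega(kn)$ rectangles per intended stabbing line. Consequently, in any $k$-line solution every line must be essentially horizontal or essentially vertical, and a standard rounding argument shows it can be replaced by an axis-parallel one stabbing the same set; this brings us back to the axis-parallel case already handled.

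The main obstacle is the geometric bookkeeping in the second step: one has to simultaneously guarantee (a) that the intended axis-parallel stabbing still works, (b) that the rectangles remain pairwise disjoint after elongation, and (c) that a tilted line really cannot stab "too many" rectangles, while keeping all coordinates representable with $O(\log n)$ bits so that the reduction runs in fpt-time. Choosing the parameters in the order ``number of rows/columns $\to$ inter-line spacing $\to$ length $L$ $\to$ width $\eps$'' should make these three requirements compatible; verifying this, together with checking that the parameter transformation $k'=k$ (or $k'=O(k)$) is preserved, constitutes the bulk of the proof.
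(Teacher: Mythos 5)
There is a genuine gap in your first step. You propose to ``classify each unit square by the axis-parallel line in the intended solution that is meant to stab it'' and then replace it by a thin rectangle of height (or width) $\eps$ centered on that one line. But the squares in the $A$--, $D$-- and $C$--gadgets have no single intended stabbing line: the whole encoding rests on the fact that a square $\square_{n-1}(i,j)$ in an $A$--gadget is stabbed by \emph{any} horizontal line representing a vertex other than $j$ \emph{and} by any vertical line representing a vertex other than $i$ (this disjunction is what makes Lemma~\ref{Lemma:AdjacencyLemma} and properties $P_1$, $P_2$ work), and which alternative is used depends on which clique the solution encodes. Collapsing such a square to a thin horizontal rectangle centered on one particular $y=y_0$ shrinks the set of horizontal lines stabbing it from a unit-width strip to an $\eps$-strip and removes the vertical alternative entirely, so the instance no longer encodes the adjacency relation and the reduction breaks. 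The same objection applies to your second step: elongating a rectangle along its ``intended'' direction enlarges the set of transversal lines in the orthogonal direction and again changes the combinatorics that the gadgets depend on.

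The paper avoids both problems by replacing each unit square with a very thin rectangle $\rho_W$ along its \emph{diagonal}. Since the diagonal has the same $x$-- and $y$--projections as the square, an axis-parallel line stabs the square if and only if it stabs $\rho_W$, so the entire axis-parallel combinatorics (all of Lemmas~\ref{Lemma:Forcing}--\ref{Lemma:ConsistencyLemma}) is preserved verbatim; disjointness is obtained by a small ``wobble'' of the squares, justified via $\delta$--robustness (Lemma~\ref{Lemma:DeltaRobust}). For arbitrary directions no new slope argument is needed: each $\rho_W$ is contained in its original square, so any line solution for the rectangle set is a solution for $\calS^*(G,k)$, and Lemma~\ref{Lemma:WlogHV} (already proved for $\calS^*(G,k)$ using the enlarged $F^*$--gadgets) closes the cycle of equivalences in Lemma~\ref{lemma:disjointRectanglesAP}. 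If you want to repair your write-up, the diagonal-rectangle idea (or some other replacement object with the same two coordinate projections as the square) is the missing ingredient; the ``one intended line per square'' classification cannot be made to work.
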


By simply applying a linear transformation, this leads to the following theorem, 
which complements the results of Langerman and Morin~\cite{DBLP:journals/dcg/LangermanM05}, 
who showed that the same problem for points is fixed parameter tractable.

\begin{theorem}\label{thm:woneharddisjointunitsquares}
Stabbing a set of disjoint unit squares in the plane with $k$ lines of arbitrary directions is \wone--hard with respect to $k$.
\end{theorem}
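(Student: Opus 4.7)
The plan is to derive Theorem~\ref{thm:woneharddisjointunitsquares} directly from Theorem~\ref{thm:woneharddisjointrectanglesap} (in its arbitrary-direction variant) via an affine map. I would first inspect the reduction that proves Theorem~\ref{thm:woneharddisjointrectanglesap} and verify (or, if needed, slightly tweak) it so that every rectangle produced has the \emph{same} width $w$ and the \emph{same} height $h$; since the construction builds rectangles from a small number of fixed gadget templates, this should be achievable by padding/scaling the gadgets uniformly. Once all rectangles are congruent, say $w \times h$ with $w \neq h$, the reduction to unit squares is a one-shot application of the non-uniform scaling
\[
T \co (x,y)\mapsto (x/w,\; y/h),
\]
which is computable in polynomial time from the input and hence trivially an fpt-reduction.

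Next I would check that $T$ preserves everything relevant. Since $T$ is a (non-singular) linear map, it sends lines to lines bijectively, so any solution of $k$ stabbing lines for the rectangle instance maps to $k$ stabbing lines for the image instance, and conversely $T^{-1}$ maps solutions back. The image of each $w\times h$ rectangle under $T$ is an \ap unit square, so the set $T(\calS)$ is a set of \ap unit squares in $\Rtwo$. Disjointness is preserved because $T$ is a bijection. Finally, $T$ does not change the parameter $k$, so the reduction is parameter-preserving and fpt.

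The only real obstacle is the congruence assumption on the rectangles produced by the proof of Theorem~\ref{thm:woneharddisjointrectanglesap}. If the construction as written yields rectangles of different sizes, I would normalize it by choosing global parameters $w$ and $h$ large enough to accommodate every gadget and then replacing each rectangle with a $w \times h$ rectangle placed at a scaled location; a sufficiently small horizontal/vertical separation guarantees that disjointness is maintained and that no new collinearities are introduced, so the set of valid stabbing lines is combinatorially the same. With this normalization in hand, the affine map $T$ above turns the W[1]-hard instance of disjoint rectangles into a W[1]-hard instance of disjoint unit squares to be stabbed by $k$ lines of arbitrary directions, completing the proof.
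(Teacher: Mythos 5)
Your proposal is correct and is essentially the paper's own argument: the paper's disjoint-rectangle instance $\calR^*(G,k)$ is by construction a set of disjoint translates of a \emph{single} thin rectangle $\rho_W$, so the congruence issue you flag never arises and no renormalization is needed. The paper then applies exactly the kind of bijective linear map you describe (its matrix $M$ is a rotation composed with anisotropic scaling rather than your pure diagonal scaling, only because $\rho_W$ is tilted at $45$ degrees), and concludes via the equivalence of axis-parallel and arbitrary-direction stabbing for $\calR^*(G,k)$ established in Lemma~\ref{lemma:disjointRectanglesAP}.
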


These theorems are generalized to a large class of objects (for example, squares, circles,
triangles).

\begin{theorem}\label{thm:woneharddisjoint} Let $O$ be a connected object in the plane. (i) If the stabbing lines are to be parallel to two different directions $u,v$ that are part of the input, the problem of stabbing a set of disjoint translates of $O$ with $k$ lines is \wone--hard with respect to $k$, unless $O$ is contained in a line parallel to $u$ or $v$. (ii) The problem of stabbing a set of disjoint translates of $O$ with $k$ lines in arbitrary directions is \wone--hard \wrt $k$.
\end{theorem}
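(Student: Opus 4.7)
I plan to reduce both parts from Theorem~\ref{thm:woneharddisjointrectanglesap}. The unifying principle is that, since $O$ is connected, its projection onto every line is an interval; hence, in any fixed direction, the set of lines stabbing a translate of $O$ forms an interval on the perpendicular line, exactly as for a rectangle.

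\textbf{Part (i).} Apply the linear map $T$ with $T(u)=e_1$ and $T(v)=e_2$; this preserves disjointness and translation structure and turns lines parallel to $u$ or $v$ into axis-parallel lines. Setting $O':=T(O)$, the hypothesis becomes that $O'$ is connected and not contained in any axis-parallel line, so its axis-aligned bounding box $B$ has strictly positive width $w$ and height $h$. By connectedness, the projection of $O'$ onto each coordinate axis coincides with that of $B$, so an axis-parallel line stabs a translate of $O'$ if and only if it stabs the corresponding translate of $B$. I would then take the reduction of Theorem~\ref{thm:woneharddisjointrectanglesap} (axis-parallel case)---which, after a minor modification and a rescaling, can be assumed to output only translates of a single $w\times h$ rectangle---and replace each such translate by the translate of $O'$ having it as its AABB. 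Disjointness is preserved by the inclusion $O'\subseteq B$, and stabbing equivalence follows from the observation above.

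\textbf{Part (ii).} My plan is to combine part (i) with a robustness argument. Pick any two directions $u, v$ such that $O$ is contained in no line parallel to $u$ or to $v$ (such directions exist since $O$ has more than one point). Part (i) then yields W[1]-hardness for lines restricted to $u$ or $v$. To upgrade to arbitrary-direction lines, I would verify that any $k$-stabbing by arbitrary-direction lines in that instance can be converted into one using only lines parallel to $u$ or $v$. The intended mechanism is to separate the translates by a factor depending on $\diam(O)$ and $k$, so that any line stabbing two or more translates must be nearly parallel to $u$ or $v$; by connectedness of $O$ such a near-parallel line can be replaced by an exactly parallel one stabbing the same translates. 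The main obstacle will be making this quantitative separation compatible with the intended positive solutions from part (i): since translates of $O$ cannot be rescaled, stretching distances may destroy $u$- or $v$-parallel lines that cover several translates simultaneously, so the part-(i) construction must be arranged so that intended collinear translates share exact $u$- or $v$-coordinates, preserved under rescaling. Alternatively, one can redo the clique reduction of Theorem~\ref{thm:wonehardunitsquaresap} directly with translates of $O$, combining its gadget structure with the interval-projection property of connected sets.
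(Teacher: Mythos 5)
Part (i) of your proposal has a fatal gap: the hard instances produced by the reduction behind Theorem~\ref{thm:woneharddisjointrectanglesap} are \emph{not} disjoint axis-parallel rectangles; they are disjoint translates of a very thin rectangle $\rho_W$ tilted at $45^\circ$, whose axis-aligned bounding boxes (the ``wobbled'' squares) overlap heavily inside the $A$-, $D$- and $C$-gadgets. So there is no family of disjoint translates of a single $w\times h$ axis-parallel box into which you could inscribe $O'$. Worse, no such family can exist unless $\fpt=\wone$: disjoint translates of a fixed axis-parallel rectangle are $\order(1)$-shallow for axis-parallel lines, so by Theorem~\ref{thm:fpt}(ii) the instances your reduction would output form a fixed-parameter tractable special case, and the reduction cannot establish hardness. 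The hardness in Theorem~\ref{thm:woneharddisjoint}(i) genuinely requires the directions $u,v$ to depend on $n$: the paper squeezes a bijective linear image of $O$ \emph{inside} the thin tilted rectangle $\rho_W$ (so disjointness comes from disjointness of the $\rho_W$'s, not of bounding boxes) while keeping its projections onto the two coordinate axes essentially equal to those of $\rho_W$, places these copies in $\calR^*(G,k)$, and then applies the inverse map; the input directions $u,v$ are the preimages of $e_1,e_2$ under that map, which is exactly why they must be part of the input.

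For part (ii) you correctly identify that your separation idea collides with the $F$-gadget structure, where a single line must stab $N=n^2$ objects, so translates that are meant to be collinear cannot be spread apart; as written this is an acknowledged, unresolved obstacle rather than a proof. The paper needs no new argument at this stage: Lemma~\ref{lemma:disjointRectanglesAP} already shows that $\calR^*(G,k)$ can be stabbed by $6k$ arbitrary lines iff by $6k$ axis-parallel lines (this was the whole point of enlarging the $F^*$-gadgets and shrinking the squares in Lemma~\ref{Lemma:WlogHV}), and replacing each $\rho_W$ by a squeezed copy of $O$ contained in it with the same relevant projections preserves this equivalence; applying the inverse linear transformation then yields part (ii) directly. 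To salvage your approach, both parts should be routed through the tilted thin-rectangle construction rather than through axis-parallel bounding boxes.
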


In contrast to the above, some special cases of the problem become fixed parameter tractable.
Let $\mathcal D$ be set of directions.
A line with a direction from $\mathcal D$ is called a $\mathcal D$-line.
A set of objects with the property that the maximum number of
objects that can be simultaneously intersected by two $\mathcal D$-lines
with different directions is bounded by $c\in \mathbb{N}$ is called
\emph{$c$--shallow for $\mathcal D$}. E.g., if we consider the case of \ap disjoint unit squares and \ap lines, the resulting sets are 1--shallow.

\begin{theorem}\label{thm:fpt} 
  (i) Stabbing a set of $n$ \ap disjoint unit squares with $k$ \ap lines is fixed parameter tractable. (ii) The problem of stabbing a set of $n$ translates of a planar connected
  object $O$ that is $\order(1)$-shallow for a set $\mathcal D$ of $\order(1)$ many
  directions with $k$ $\mathcal D$-lines can be decided in $\order(n \log n)$
  time for every fixed $k$.
\end{theorem}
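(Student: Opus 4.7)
The strategy is to prove~(ii) in full and obtain~(i) as a corollary. Disjoint axis-parallel unit squares, together with axis-parallel stabbing lines, form a $1$-shallow instance for the direction set $\calD=\{\text{horizontal},\text{vertical}\}$: two perpendicular lines meet in a single point, and disjoint unit squares have depth at most one everywhere. Since $|\calD|=2$ and the shallowness constant $c=1$ are both $\order(1)$, the algorithm from~(ii) runs in $\order(n\log n)$ time on such an instance and is in particular fixed-parameter tractable in $k$, which proves~(i).

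For~(ii), the plan is a bounded search tree with running time $f(k,|\calD|,c)\cdot n\log n$. I would first preprocess by sorting, for each direction $d\in\calD$, the translates according to the right endpoint of their projection onto the axis perpendicular to $d$. Since $|\calD|=\order(1)$, this costs $\order(n\log n)$ in total and enables linear-time greedy one-dimensional interval stabbing in each direction.

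The key structural lemma is a canonical form: by sliding any solution line in its perpendicular direction until it first exits a stabbed translate, one may assume that every line in an optimal solution passes through the right endpoint of the translate with smallest right endpoint (in the line's direction) among those it stabs. Hence there is only a discrete set of $\order(n)$ candidate lines per direction.

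The recursive procedure, on input $(\calS,k)$, picks a pivot translate $\sigma^*$ (say, the first-in-order translate of $\calS$ in a fixed direction $d_1\in\calD$) and branches on which line of the hypothetical solution stabs $\sigma^*$. For each direction $d\in\calD$, only a bounded number of canonical positions need to be tried: the one determined by $\sigma^*$ itself, and the ones determined by the first-in-order translates of the current subproblem in each direction. Each branch places the chosen line, removes from $\calS$ all translates it stabs, and recurses with parameter $k-1$; maintaining the sorted structures under deletions costs $\order(n)$ per node.

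The main obstacle is proving that the branching factor stays independent of $n$. This is precisely where $c$-shallowness enters: a line stabbing $\sigma^*$ in some direction $d$ can, without loss of coverage, be pivoted to one of the distinguished canonical candidates above, because any two lines with distinct directions jointly stab at most $c$ translates, which rules out the existence of a sparse but essential alternative choice far from the distinguished positions. With depth at most $k$ and branching factor $g(|\calD|,c)$, the search tree has $g(|\calD|,c)^k=\order(1)$ nodes for fixed parameters; combined with $\order(n)$ work per node and $\order(n\log n)$ preprocessing this yields the claimed $\order(n\log n)$ bound.
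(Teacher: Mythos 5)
Your reduction of (i) to (ii) via $1$-shallowness and your canonical-position observation (only the $\order(|\calD|)\cdot n$ lines supporting object boundaries matter) both match the paper. The gap is in the branching step. You branch on \emph{which line of the solution stabs a pivot object $\sigma^*$} and claim a bounded number of canonical candidates per direction. This fails: if $\sigma^*$ is first-in-order for direction $d_1$, then indeed the single line in direction $d_1$ through the extreme boundary position of $\sigma^*$ dominates all $d_1$-lines through $\sigma^*$; but for any \emph{other} direction $d$, the lines in direction $d$ that meet $\sigma^*$ sweep an interval of offsets of length equal to the width of the translate, and within that interval there can be $\Theta(n)$ pairwise incomparable stabbed sets (e.g.\ clusters of translates far away along direction $d$, each occupying a slightly different perpendicular offset). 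Shallowness is of no help here, since it only constrains pairs of lines with \emph{different} directions, whereas the problematic multiplicity is among \emph{parallel} candidates through $\sigma^*$. So your branching factor is not $g(|\calD|,c)$ but $\Omega(n)$, and the claimed $f(k)\cdot n\log n$ bound does not follow.

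The paper's proof closes exactly this hole with a different pivot. It first applies a data reduction rule (keep at most $ck+1$ translates out of any group stabbed by exactly the same lines of a given direction), which guarantees that whenever some line stabs more than $ck$ objects there is a line $l$ with $ck+1\le |I(l)|\le 2ck+1$. By pigeonhole one of the $k$ solution lines stabs at least $c+1$ objects of $I(l)$; by $c$-shallowness that line must be \emph{parallel} to $l$; and because the objects are translates confined to the narrow strip around $l$, any such parallel line is dominated by one of the $2|I(l)|=\order(ck)$ lines through the extremal boundary positions of the objects in $I(l)$ (Lemmas~\ref{Lemma:BoundingLemma} and~\ref{Lemma:BranchingNew}). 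This is what bounds the branching factor by $\order(ck)$ rather than $\order(n)$. When no line stabs more than $ck$ objects, the instance is a kernel of at most $ck^2$ objects and is solved directly (the paper uses a double-counting argument to branch on the $\order(rck)$ lines through a minimum-degree object). To repair your proof you would need both ingredients: the heavy-line pivot with the domination lemma for translates, and the kernel argument for the case where no heavy line exists.
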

Our algorithm is based on simple data reduction and branching rules that lead to a problem kernel.

Again on the negative side, we show the following:
\begin{theorem}\label{thm:balls}
Stabbing $n$ unit balls in $\R^{d}$ with one line is 
W[1]--hard with respect to $d$.
\end{theorem}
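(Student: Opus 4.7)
The plan is to construct an fpt-reduction from the $k$-\clique\ problem, which is \wone-hard with respect to $k$~\cite{FG06}. Given a graph $G=(V,E)$ on $n$ vertices together with a parameter $k$, I will build in time $f(k)\cdot\mathrm{poly}(n)$ a family $\calS$ of $\mathrm{poly}(n)$ unit balls in $\Rd$ with $d=\order(k)$, such that $\calS$ admits a common line transversal if and only if $G$ contains a $k$-clique. Recalling that the stabbing problem is the decision version of the smallest enclosing cylinder problem, an equivalent goal is to force the $n$ centers to lie in a cylinder of radius $1$ precisely when $G$ has a $k$-clique.

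The construction proceeds in three layers. First, I place a small number of \emph{anchor balls} whose centers lie very far apart along a chosen axis $e_1$. Any line stabbing all of them is forced to have direction inside a narrow cone around $e_1$, so the candidate transversals can be re-parameterized as $\ell(t)=(t,\,a_2+tb_2,\,\ldots,\,a_d+tb_d)$ with small coefficients. In each of the remaining $d-1$ coordinates the line then behaves like an independent affine function $f_j(t)=a_j+tb_j$ of two real parameters.

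Second, I design \emph{vertex-selection gadgets}: for each slot $j\in\{1,\ldots,k\}$ of the desired clique I dedicate one or two of the remaining coordinate directions and place unit balls so that the parameters $(a_j,b_j)$ are essentially forced into one of $n$ discrete configurations, each one corresponding to a vertex of $V$. Third, for each unordered pair of slots $\{j,j'\}$ I introduce \emph{edge-verification balls}, one per edge $\{u,v\}\in E$, located so that the line can simultaneously stab the whole collection associated to $(j,j')$ only when the vertices selected in slots $j$ and $j'$ are adjacent in $G$.

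The main obstacle I expect is to arrange the gadgets so that (i) a continuous geometric quantity (the line's slope/intercept in each pair of coordinates) is effectively pinned to one of $n$ discrete values, and (ii) the gadgets associated with different slots do not interact geometrically, so that the combinatorial choice in one slot is independent of the choices in the other slots. Once this is achieved, correctness follows the familiar three-step pattern: the anchors restrict attention to lines of the claimed parametric form; the selection gadgets enforce one vertex choice per slot; and the edge gadgets are simultaneously stabbed if and only if the $k$ chosen vertices form a clique in $G$. Since $d=\order(k)$ and $|\calS|$ is polynomial in $n$, this is an fpt-reduction, yielding \wone-hardness of single-line stabbing with respect to the dimension~$d$.
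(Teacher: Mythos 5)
Your high-level architecture (restrict the direction of candidate lines, discretize a per-slot parameter to $n$ values, add pairwise constraint balls) matches the shape of the paper's reduction, but the proposal stops exactly where the real work begins: you explicitly defer the two steps that constitute the entire technical content, namely how unit balls can pin a continuous line parameter to one of $n$ discrete values, and how to make the $k$ slots geometrically independent. The paper's solution to both is a single mechanism that your plan does not contain. It adds $-B$ for every ball $B$, which reduces the problem to lines through the origin, for which ``$l$ stabs $B$'' becomes $(c\cdot\vec{l}\,)^2\ge \lVert c\rVert^2-r^2$ --- a condition depending only on the projection of the unit direction $\vec{l}$ onto the span of the center $c$. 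Viewing $\R^{2k}$ as $k$ orthogonal planes $E_1,\dots,E_k$ and placing $2n$ equal-radius balls with centers on the unit circle of each $E_i$, one gets that stabbing all balls of plane $E_i$ forces $\lVert\vec{l}_i\rVert\ge\lambda$ with $\lambda=1/\sqrt{k}$ for the chosen radius; then $1=\sum_i\lVert\vec{l}_i\rVert^2\ge k\lambda^2=1$ forces equality in every plane, which simultaneously (a) snaps each projection $\vec{l}_i$ to one of $2n$ wedge apices (the discretization) and (b) decouples the planes (the independence). Nothing in your sketch plays the role of this tightness argument.

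Moreover, the specific mechanism you do propose is likely to fail as stated. With anchor balls far along $e_1$ and the affine parameterization $\ell(t)=(t,a_2+tb_2,\dots)$, the condition that $\ell$ passes within distance $1$ of a center supported on the coordinates of slot $j$ still involves the offsets $a_{j'}$ of \emph{all} other slots, because the point-to-line distance in $\Rd$ does not split coordinatewise for lines not through the origin. So your selection and edge gadgets would interact across slots, which is precisely the non-interaction you list as an open obstacle. A second, smaller gap: even granting the discretization, the constraint balls must be stabbed by \emph{every} non-forbidden candidate line while missing the forbidden one; the paper needs a quantitative separation (Lemma~\ref{omega_bound}, a lower bound $|\cos\omega|>\mu/\sqrt{k}$ on the angle to all other candidate directions) to place each constraint ball close enough to the origin. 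Your proposal gives no analogue. As it stands this is a plan with the hard steps acknowledged but unsolved, not a proof.
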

Note that since the balls are unit, the above problem is the decision version of the minimum enclosing cylinder problem.
We prove this result by an fpt-reduction from the $k$-independent set problem in
general graphs, which is known to be W[1]-complete~\cite{FG06}.
In the reduction, the dimension $d$ is linear in the size $k$ of the independent set, hence
an $n^{o(d)}$-time algorithm for this problem implies an $n^{o(d)}$-time algorithm for the $k$-independent set   
problem, which in turn implies that $n$-variable $3$SAT can be solved in $2^{o(n)}$-time. 
The Exponential Time Hypothesis (ETH)~\cite{DBLP:journals/jcss/ImpagliazzoP01} conjectures that no such algorithm exists.

Table~\ref{tab:OurResults} summarizes our results in $\R^{2}$.
The numbers refer to the theorems that prove the corresponding case. 
If no reference is given, the result is trivially implied by the result on its left side.
\begin{table}[h]
	\centering
		\begin{tabular}{|l|c|c|c|c|}\hline
	& \ap  & two dir. fixed & two dir. input & arbitrary  \\	
			\hline unit squares & \wone--h (\ref{thm:wonehardunitsquaresap})& \wone--h & \wone--h & \wone--h (\ref{thm:woneharddisjointunitsquares}) \\
			\hline disj. unit sq. & \fpt (\ref{thm:fpt} (i)) & \fpt (\ref{thm:fpt} (ii)) & \wone--h (\ref{thm:woneharddisjoint} (i)) & \wone--h (\ref{thm:woneharddisjointunitsquares})\\
			\hline disj. rect. fixed & \fpt (\ref{thm:fpt} (ii)) & \fpt (\ref{thm:fpt} (ii)) & \wone--h (\ref{thm:woneharddisjoint} (i))& \wone--h (\ref{thm:woneharddisjoint} (ii))\\
			\hline disj. rect. input & \wone--h  (\ref{thm:woneharddisjointrectanglesap})& \wone--h & \wone--h & \wone--h (\ref{thm:woneharddisjoint} (ii)) \\
			\hline
		\end{tabular}
	\caption{Our results. Term `fixed' refers to the case where the objects or line directions are not part of the input.}
	\label{tab:OurResults}
\end{table}

\paragraph{Related Results}

The parameterized complexity of geometric problems has not been studied
extensively in the past. Some recent examples include work about Klee's
measure problem~\cite{1377693}, clustering~\cite{CGKR08, marx-esa2005}, and
shape-matching~\cite{CGK06}. The survey by Giannopoulos et al.~\cite{DBLP:journals/cj/GiannopoulosKW08} 
provides an extensive overview of
the known results in the area.

The problem of stabbing (or hitting) unit balls in $\Rd$ with one line was
show to be NP-hard when $d$ is part of the input by Megiddo~\cite{Megiddo90onthe}; unless P=NP, the paper
also rules out the existence of a polynomial time approximation scheme for
this problem. This problem is equivalent to the minimum enclosing cylinder problem for points, see Varadarajan et al.~\cite{VVYZ07}. 
Exact and approximation algorithms for the latter problem can be found, for example, 
in B\u{a}doiu et al.~\cite{BHI02}.

Langerman and Morin~\cite{DBLP:journals/dcg/LangermanM05} showed that an abstract $NP$-hard
covering problem that models a number of concrete geometric (as well as
purely combinatorial) covering problems is in \fpt. One example is the
problem of deciding if a set of $n$ points in the plane can be covered
(stabbed) by $k$ lines.

Hassin and Megiddo~\cite{HM91} showed that stabbing line segments with \ap lines is NP--hard even when the 
segments are unit and horizontal. They also developed various constant factor approximation algorithms
for stabbing sets of translates of a given object in the plane and in higher dimensions with \ap lines.

Recently, and independently of our work, 
Dom et al.~\cite{DFR09} considered the parameterized complexity of a stabbing problem similar to ours:
Given a set of \ap lines in the plane and a set of \ap rectangles, find a set of $k$ of those lines that
stab the rectangles. They showed that this problem is \wone--hard (the reduction produces rectangles of different sizes). 
They also claim that this is true even for \ap squares and 
that the problem is in W[1], however no proofs are given for these two results. Observe that 
their hardness result does not imply Theorem~\ref{thm:wonehardunitsquaresap} since, while stabbing \ap unit squares with \ap lines 
(the lines are not given) obviously reduces to the problem they consider, the converse is not at all obvious.
They also showed that for disjoint rectangles the problem is fixed-parameter tractable under the condition that each rectangle is 
stabbed by the same number of vertical and horizontal lines in the given set.
  



\section{Stabbing with $k$ lines}

\subsection{Hardness Results}\label{section:Hardness}
In this section we present the hardness results. The proofs are by a reduction from the $k$--\clique\ problem for directed graphs, which is shown to be \wone--complete in \cite{DF99}. First, in Section \ref{sssec:wonehardUnitSquaresAP}, we show that the problem of stabbing \ap unit squares with \ap lines is \wone--hard. This construction is then modified to work for the case when the lines can have arbitrary directions. From this, minor modifications are made to prove that for this case, the problem is even hard when the squares are disjoint. Finally, we show that the proofs also work for a large class of other objects. In this section, the objects are assumed to be open, but it is easy to modify the proofs to work for closed objects, too.

\subsubsection{Stabbing \ap unit squares with \ap lines in the
  plane}\label{sssec:wonehardUnitSquaresAP}

From a given graph $G$ we will construct a set $\calS(G, k)$ of \ap unit squares in $\Rtwo$ that can be stabbed by $k' := 6k$ lines if and only if the graph has a $k$--clique. The set will be of size $O(n^2k^2)$ and thus polynomial in both $n$ and $k$.

\paragraph{General Idea}
Let $[n] := \{1, \dots, n \}$ and $G = ([n], E)$ be a simple directed graph with no loops. For clarity of presentation, we first create instances $\calS'(G, k)$ that consist of squares of two different sizes, namely some with side length $n-1$ and some with side length $n$. A minor modification will then make them all have the same size. 

As all the squares placed in $\calS'(G, k)$ have integer coordinates and are open, we can simplify our arguments using the following two observations:

\textbf{Observation 1:}
\emph{All the lines of the form $y = i$ or $x = i$ for $i \in \N$ can be neglected, as they can be replaced by any line of the form $y = i \pm \eps$ or $x = i \pm \eps$, $0 < \eps < 1$, respectively, without intersecting fewer squares.
}\\
and

\textbf{Observation 2:}
\emph{Two lines $y = c$, $y' = c'$ with $i < c, c' < i+1$, $i \in \N$, intersect the same squares, and analogously for vertical lines.}

In the final construction there will be $k$ horizontal and $k$ vertical \textit{double strips}, $S_h^1,\dots, S_h^k$ and $S_v^1,\dots, S_v^k$, respectively, to choose lines from. Out of each of those strips, two ``consistent'' lines will have to be chosen in order to get a solution of the specified size. Around every intersection of two such orthogonal strips, we will place a \textit{gadget}, consisting of a set of squares, within a region of suitable size, as indicated in Figure \ref{fig:StripsOverview}.
\begin{figure}[ht]
	\centering
		\includegraphics[scale=.5]{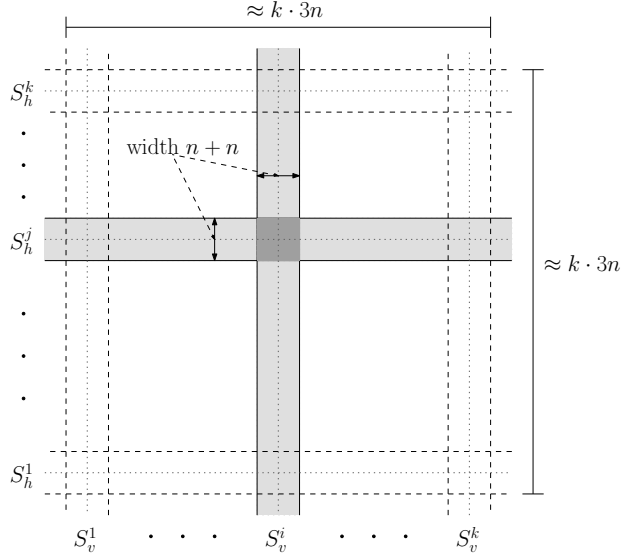}
	\caption{Strips overview}
	\label{fig:StripsOverview}
\end{figure}
We will ensure that any selection of such $4k$ lines has the following properties 
\begin{itemize}
	\item[\textbf{$P_0$:}] Each two lines inside the same double strip will correspond to the same vertex.
	\item[\textbf{$P_1$:}] Two orthogonal line pairs in the strips $S_h^i, S_v^j$, $i \neq j$, will stab all the squares inside the region $S_h^i \cap S_v^j$ if and only if they represent vertices that are connected in $G$.
	\item[\textbf{$P_2$:}] Two orthogonal line pairs in the strips $S_h^i, S_v^i$ will stab all the squares inside the region $S_h^i \cap S_v^i$ if and only if they correspond to the same vertex.
\end{itemize}

Any selection of $2k$ such line pairs will then correspond to a set $C$ of $k$ vertices and will stab all the squares if and only if the vertices in $C$ form a $k$--clique in $G$. 

Besides these $4k$ lines, we will need $2k$ more lines to guarantee the consistency of such a selection ($P_0$). To ensure the properties, several gadgets are constructed, which we will describe in detail now.

\paragraph{The Gadgets} In the following, let $\square_l(x,y)$ denote the \ap square with side length $l$ and lower left corner $(x, y)$. A \textit{gadget} $T$ will consist of a collection of \ap squares. Let $T(x, y)$ denote the copy of $T$ whose squares are placed relative to $(x, y)$. We say that a square is at position $(x', y')$ in gadget $T(x, y)$, if the lower left corner of the square has absolute coordinates $(x + x', y + y')$. Unless stated otherwise, the coordinates of \ap lines are also given relative to the gadget's offset, i.e., if we refer to lines $h\co y = c$ and $v\co x = c$ passing through the gadget $T(x', y')$, we speak about the lines $h\co y = y' + c$ and $v\co x = x' + c$, respectively.

\paragraph{The $F$--Gadget (Forcing)}
The $F$--gadget will be used to ensure that in any solution of size $6k$, a line through a specified strip (of width $n$) must be chosen. We define them as
\[ F_h := \{\square_n(-in, 0) \mid 1 \leq i \leq 6k + 1 \} \]
and 
\[ F_v := \{\square_n(0, -in) \mid 1 \leq i \leq 6k + 1 \}. \]
The fact that they really \textit{force} lines in the specified region follows from the very simple
\begin{proposition}\label{Lemma:Forcing} In order to stab a gadget $F_h(x, y)$ by $6k$ lines, at least one line of the form $y = c$ (relative to the gadget) for some $0 < c < n$ must be chosen.
\end{proposition}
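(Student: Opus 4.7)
The plan is a short pigeonhole argument. The gadget $F_h(x,y)$ consists of $6k+1$ open axis-parallel squares of side length $n$ whose lower-left corners are the points $(x - in, y)$ for $i = 1, \ldots, 6k+1$; in particular they all share the vertical extent $(y, y+n)$, while their horizontal projections are pairwise disjoint intervals of length $n$. Since the stabbing lines in this section are axis-parallel, it suffices to consider horizontal lines $y = c$ and vertical lines $x = c$.

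First I would note that any vertical line $x = c$ stabs at most one square of $F_h(x,y)$, simply because the $x$-projections of the squares are pairwise disjoint. Second, I would note that a horizontal line $y = c$ stabs either all $6k+1$ squares at once (precisely when $y < c < y+n$) or none of them: if $c \leq y$ or $c \geq y+n$ then the openness of the squares ensures the line misses each of them, while otherwise it lies in the common vertical strip and cuts through every square.

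Now suppose for contradiction that some set $L$ of $6k$ axis-parallel lines stabs every square of $F_h(x,y)$ while containing no line of the form $y = c$ with $y < c < y+n$. By the two observations above, every line in $L$ then stabs at most one square of the gadget, so at most $6k$ of the $6k+1$ squares are covered, a contradiction. Hence at least one line $y = c$ with $0 < c < n$ (in coordinates relative to the gadget's offset) must belong to $L$, as claimed.

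The only subtle point, rather than a genuine obstacle, is the open-square convention announced at the start of this section: without it, a horizontal line exactly on $y = y$ or $y = y+n$ could be taken to stab all $6k+1$ squares at once, and the conclusion would fail. Everything else is immediate, and the symmetric statement for $F_v$ follows by interchanging the roles of the two axes.
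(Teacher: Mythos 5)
Your pigeonhole argument is correct and is exactly the reasoning the paper intends: the $6k+1$ squares of $F_h$ have pairwise disjoint $x$-projections, so each vertical line hits at most one, forcing a horizontal line in the strip $0<c<n$; the paper in fact states the proposition without proof, deeming it ``very simple.'' Your remark about the open-square convention is also consistent with the paper's Observation~1, so nothing is missing.
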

For reasons of symmetry, an analogous proposition holds for the vertical case as well.
We now define the correspondence of lines chosen to vertices in $G$:
\begin{definition} A line $l\co y = c$ through a horizontal $F$--gadget is said to \textit{represent} vertex $\rep(l) := \rup{c} \in V$, and analogously for the vertical case.
\end{definition}
As the $F$--gadgets have a width of $n$, for each vertex in $G$ there exists a line that represents this vertex. Because of Observation 2, two lines that represent the same vertex in a gadget $F$ will intersect the same squares. The (open) strips of width 1 where all the lines represent the same vertex are called \textit{v(ertex)--strips}. Each double strip (of width $2n$) will consist of $2$ vertex strips. See Figure \ref{fig:FGadget}.
\begin{figure}
	\centering
		\includegraphics[width=0.7\textwidth]{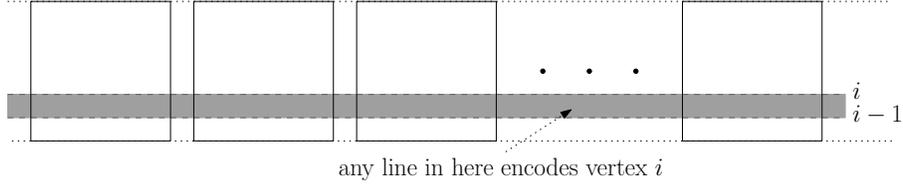}
	\caption{An $F$--Gadget}
	\label{fig:FGadget}
\end{figure}

\paragraph{The $A$--Gadget (Adjacency)}
This gadget represents the adjacency relation of the graph $G$. All the squares will be placed inside a region of size $2n \times 2n$. For each pair of vertices $(i, j)$ such that $(i, j) \notin E$, including the missing loops $(i = j)$, it will contain a square that \textit{forbids} the line pairs corresponding to these vertices to be chosen at the same time, namely $\square_{n-1}(i, j)$.\\
So
\[ A := \{ \square_{n-1}(i, j) \mid (i, j) \notin E \} \]
In the final construction, there will be four $F$--gadgets forcing one line through each of the strips
\begin{itemize}
	\item $S_h^- := \R \times (0, n)$ 
	\item $S_h^+ := \R \times (n, 2n)$
	\item $S_v^- := (0, n) \times \R$ 
	\item $S_v^+ := (n, 2n) \times \R$
\end{itemize}
relative to the gadget's coordinates. $S_h^-$ and $S_h^+$ define a horizontal and $S_v^-$ and $S_v^+$ a vertical double strip.

If a line $l$ lies inside $S_h^-$ or $S_v^-$, it is called \textit{negative}, otherwise it is called \textit{positive}. Two parallel lines are called \textit{antipodal} if one is negative and the other is positive. In the final construction, it will be ensured that if a negative line is chosen that represents vertex $i$, then, in the same double strip, a parallel positive line must be chosen that also represents $i$. Such a line pair is then said to represent vertex $i$.

The main property of the $A$--gadget is stated by the next lemma.
\begin{lemma}\label{Lemma:AdjacencyLemma} Two antipodal vertical lines through $A$ that both represent $i$ and two antipodal horizontal lines through $A$ that both represent $j$ intersect all the squares inside $A$ if and only if $(i, j) \in E$.
\end{lemma}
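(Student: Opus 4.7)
The plan is to characterise, for each of the four lines individually, exactly which squares of $A$ it stabs, and then combine the four characterisations. Recall that $\square_{n-1}(i',j')$ occupies the open rectangle $(i',i'+n-1)\times(j',j'+n-1)$, and by Observation~2 only the vertex strip in which a line lies matters.

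First I would show that a negative vertical line representing $i$ stabs $\square_{n-1}(i',j')$ iff $i>i'$. Such a line lies at some $x_0\in(i-1,i)$; since $i\le n$ and $i'\ge 1$ give $i'+n-1\ge n\ge i$, the upper side of the square is never binding in the $x$-direction, and the stabbing condition reduces to $x_0>i'$, i.e.\ to $i>i'$. By the symmetric analysis (now the lower side is never binding, as $x_0>n\ge i'$), a positive vertical line representing $i$ lies at $x_0\in(n+i-1,n+i)$ and stabs $\square_{n-1}(i',j')$ iff $i<i'$. Consequently, an antipodal pair of vertical lines both representing $i$ stabs $\square_{n-1}(i',j')$ precisely when $i\ne i'$. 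An identical argument in the other coordinate shows that the antipodal horizontal pair representing $j$ stabs $\square_{n-1}(i',j')$ iff $j\ne j'$.

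Combining, the four lines jointly stab $\square_{n-1}(i',j')$ iff $(i',j')\ne(i,j)$. By definition, $A$ consists of exactly those squares $\square_{n-1}(i',j')$ with $(i',j')\notin E$, including the loops $i'=j'$. Hence the four lines stab every square of $A$ iff $(i,j)$ is itself not among these non-edges, which is iff $(i,j)\in E$.

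The one step that really needs care is the asymmetric split of roles between negative and positive lines: each alone can only exclude ``one side'' of the possible indices $i'$, and only together do they certify $i\ne i'$. Verifying that the ``easy side'' of each intersection condition is automatically satisfied is where the specific parameters $i,i'\in[n]$ and the chosen side length $n-1$ are used. This is also the reason for including the missing loops $(i',i')$ in the construction of $A$: it guarantees that the ``diagonal'' square $\square_{n-1}(i,j)$ with $i=j$ is present whenever $(i,j)\notin E$, so that the reverse direction of the equivalence is not accidentally vacuous.
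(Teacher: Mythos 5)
Your proof is correct and follows the same approach as the paper: the paper's proof simply asserts that a square $\square_{n-1}(i',j')$ missed by all four lines must have $i'=i$ and $j'=j$, and your strip-by-strip computation is exactly the verification of that assertion (together with the same converse via the square $\square_{n-1}(i,j)$). You have merely supplied the elementary coordinate checks that the paper leaves implicit.
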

\begin{proof}
If a square $\square_{n-1}(i', j')$ is not intersected by these lines, we must have $i = i'$ and $j = j'$ and thus $(i, j) = (i', j') \notin E$. If, conversely, $(i, j) \notin E$, then the square $\square_{n-1}(i, j)$ is in $A$ but is not intersected by any of these four lines. 
\end{proof}
With this it will be possible to ensure property $P_1$. Observe that, as the graph contains no loops, also $i \neq j$ is ensured.

An example is shown in Figure \ref{fig:AGadgetExample}. There, the directed edges in both directions are drawn as a single undirected edge. The four squares added for the missing loops are not shown.  
\begin{figure}
	\centering
		\includegraphics[width=0.7\textwidth]{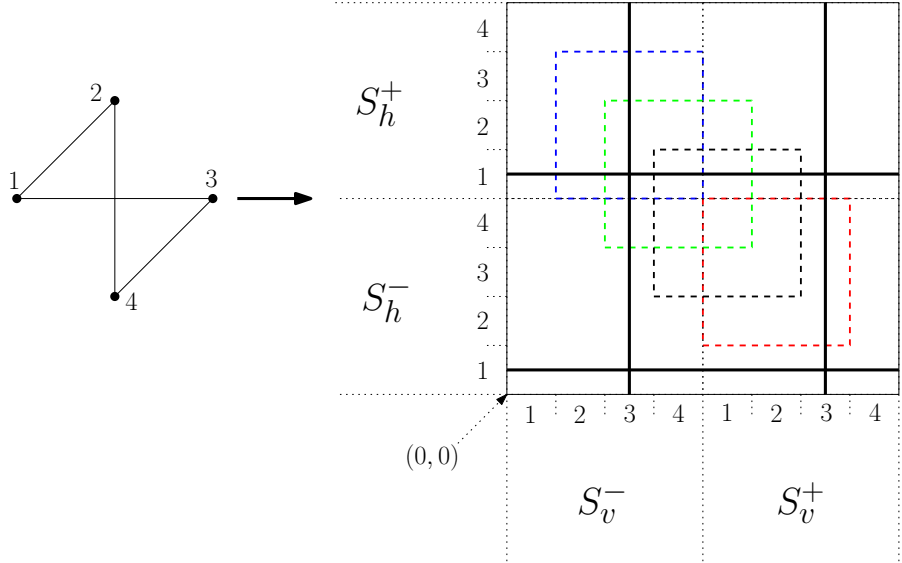}
	\caption{An $A$--Gadget with two antipodal pairs, representing $3$ and $1$, indicated}
	\label{fig:AGadgetExample}
\end{figure}
\paragraph{The $D$--Gadget (Diagonal)}
This gadget is a special $A$--gadget for the graph with the adjacency defined by the identity matrix $I$. It thus consists of the squares
\[ D := \{\square_{n-1}(i, j) \mid 1 \leq i \neq j \leq n \} \]
and will be used to ensure property $P_2$. The regions forced through such a gadget will be the same as for the $A$--gadgets. Thus, by applying Lemma \ref{Lemma:AdjacencyLemma}, all the squares inside a $D$--gadget are stabbed if and only if the vertical and the horizontal antipodal line pair represent the same vertex.

\paragraph{The $C$--gadget (Consistency)}
This type of gadget will guarantee a certain distance between two antipodal lines of the same direction inside the same double strip. 

It ensures that if a size $6k$ solution contains a negative line $l^-$ that represents vertex $i$, then, in the same double strip, it also contains a positive parallel line $l^+$ that represents the same vertex. Thereby it will be possible to identify such a line pair with the vertex $i$, which will ensure property $P_0$.

We continue to describe the $C$--gadgets for the horizontal case. A $C_h$--gadget consists of the union of the two sets
\[ \{ R_i^- := \square_{n-1}(i, i - n + 1) \mid 1 \leq i \leq n-1 \} \]
and
\[ \{ R_i^+ := \square_{n-1}(i - n, n + i - 1) \mid 2 \leq i \leq n \}. \]
In the final construction there will be three $F$--gadgets that ensure the existence of a line in each of the strips
\begin{itemize}
	\item $S_h^- = \R \times (0, n)$ 
	\item $S_h^+ = \R \times (n, 2n)$
	\item $S_{C_h} = (0, n) \times \R$
\end{itemize}
relative to the placement of the gadget. So, in any solution of size $6k$, through each $C$--gadget there will be three lines. Why two of them are given the same name as the strips for the $A$--gadgets will become clear soon.

As for an $A$--gadget, there are again $2n$ combinatorially different horizontal strips to chose lines from. The following lemma states the main property of the $C_h$--gadgets:
\begin{lemma}\label{Lemma:ConsistencyLemma} Let $h^-, h^+$ be two antipodal horizontal lines in $S_h^-, S_h^+$, respectively, then there exists a vertical line that together with $h^-, h^+$ intersects all of the squares belonging to the $C_h$--gadget if and only if $\rep(h^+) \geq \rep(h^-)$.
\end{lemma}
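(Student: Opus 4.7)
My plan is to analyze which squares of the $C_h$-gadget are already covered by $h^-$ and $h^+$, and then inspect the $x$-projections of whatever is left to decide when a single vertical line can cover them all. Write $h^-\co y=c$ with $c\in(0,n)$ and $h^+\co y=c'$ with $c'\in(n,2n)$, and set $a=\rep(h^-)$, $b=\rep(h^+)\in\{1,\dots,n\}$; by Observation~1 I may take $c$ and $c'-n$ non-integer.

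First I would do a direct coordinate check on the $y$-ranges. Since $R_i^-=\square_{n-1}(i,i-n+1)$ has $y$-range $(i-n+1,i)$ and $c\in(0,n)$, the line $h^-$ stabs $R_i^-$ iff $c<i$, i.e., iff $a\le i$. Symmetrically $R_i^+$ has $y$-range $(n+i-1,2n+i-2)$, so $h^+$ stabs $R_i^+$ iff $c'>n+i-1$, i.e., iff $b\ge i$. Because the two horizontal lines live in disjoint strips, neither reaches the other family. Hence the still-uncovered squares are exactly $\{R_i^-: 1\le i<a\}\cup\{R_i^+: b<i\le n\}$, and the lemma reduces to deciding when a single vertical line $v\co x=d$ can cover this leftover set.

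For the second step I would read off the $x$-projections: $R_i^-$ projects to $(i,i+n-1)$ and $R_i^+$ to $(i-n,i-1)$. Each family is a chain of unit shifts of a length-$(n-1)$ interval, so the valid ranges for $d$ telescope to $\bigcap_{i=1}^{a-1}(i,i+n-1)=(a-1,n)$ and $\bigcap_{i=b+1}^{n}(i-n,i-1)=(0,b)$, with empty intersections interpreted as all of $\R$. A suitable $v$ therefore exists iff $(a-1,n)\cap(0,b)\ne\emptyset$, which, since $1\le a,b\le n$, is equivalent to $a-1<b$, i.e., to $\rep(h^-)\le\rep(h^+)$. The whole argument is really just bookkeeping, and I do not expect any genuine obstacle; the only thing requiring care is the asymmetry between the two families, since the $R_i^-$ push $d$ upward while the $R_i^+$ push $d$ downward, and it is precisely this asymmetry that yields the directional inequality in the statement.
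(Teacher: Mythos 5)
Your proposal is correct and follows essentially the same route as the paper: identify the squares left unstabbed by $h^-$ and $h^+$ (the two ``staircases'' $R_i^-$ for $i<\rep(h^-)$ and $R_i^+$ for $i>\rep(h^+)$) and check when their $x$-projections have a common point, which reduces to comparing the left end of $R_{\rep(h^-)-1}^-$ with the right end of $R_{\rep(h^+)+1}^+$. Your version merely makes the telescoping of the full intersection explicit where the paper argues via the two extremal squares, so there is nothing substantive to add.
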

\begin{proof}
First suppose $1 \leq \rep(h^+) < \rep(h^-) \leq n$. Then the two squares $R_{\rep(h^-) - 1}^-$ and $R_{\rep(h^+) + 1}^+$ are defined and are both not stabbed by these two lines. But as 
\[ \underbrace{(\rep(h^-) - 1)}_{\text{left end of } R_{\rep(h^-) - 1}^-} \geq \rep(h^+)
 = \underbrace{(\rep(h^+) + 1) - n + (n-1)}_{\text{right end of } R_{\rep(h^+) + 1}^+} \]
they cannot be stabbed by a single vertical line (recall that the squares are open). See Figure \ref{fig:ConsistencyGadgetFailed}.

If, conversely, $n > \rep(h^+) \geq \rep(h^-) > 1$ (if either $\rep(h^+) = n$ or $\rep(h^-) = 1$, it is trivial), we have that
\begin{eqnarray*} 
D & :=   & \bigcap_{R_i^- \notin I(h^-)}\pr_x(R_i^-) \cap \bigcap_{R_i^+ \notin I(h^+)}\pr_x(R_i^+) \\
  & =    & \pr_x(R_{\rep(h^-) - 1}) \cap \pr_x(R_{\rep(h^+) + 1}) \\
  & \neq & \emptyset
\end{eqnarray*}
as 
\[ \rep(h^-) - 1 < \rep(h^+) = \rep(h^+) + 1 - n + (n-1), \] 
i.\,e., the left side of every $R^-$--square that is not stabbed is to the left of the right side of every $R^+$--square that is not stabbed. Thus, all the squares left can be stabbed by a single vertical line, namely any line of the form $x = c$ for $c \in D$. See Figure \ref{fig:ConsistencyGadgetSuccess}. 
\end{proof}

\begin{figure}
	\begin{minipage}[t]{0.5\textwidth}
		\centering \includegraphics[scale=.45]{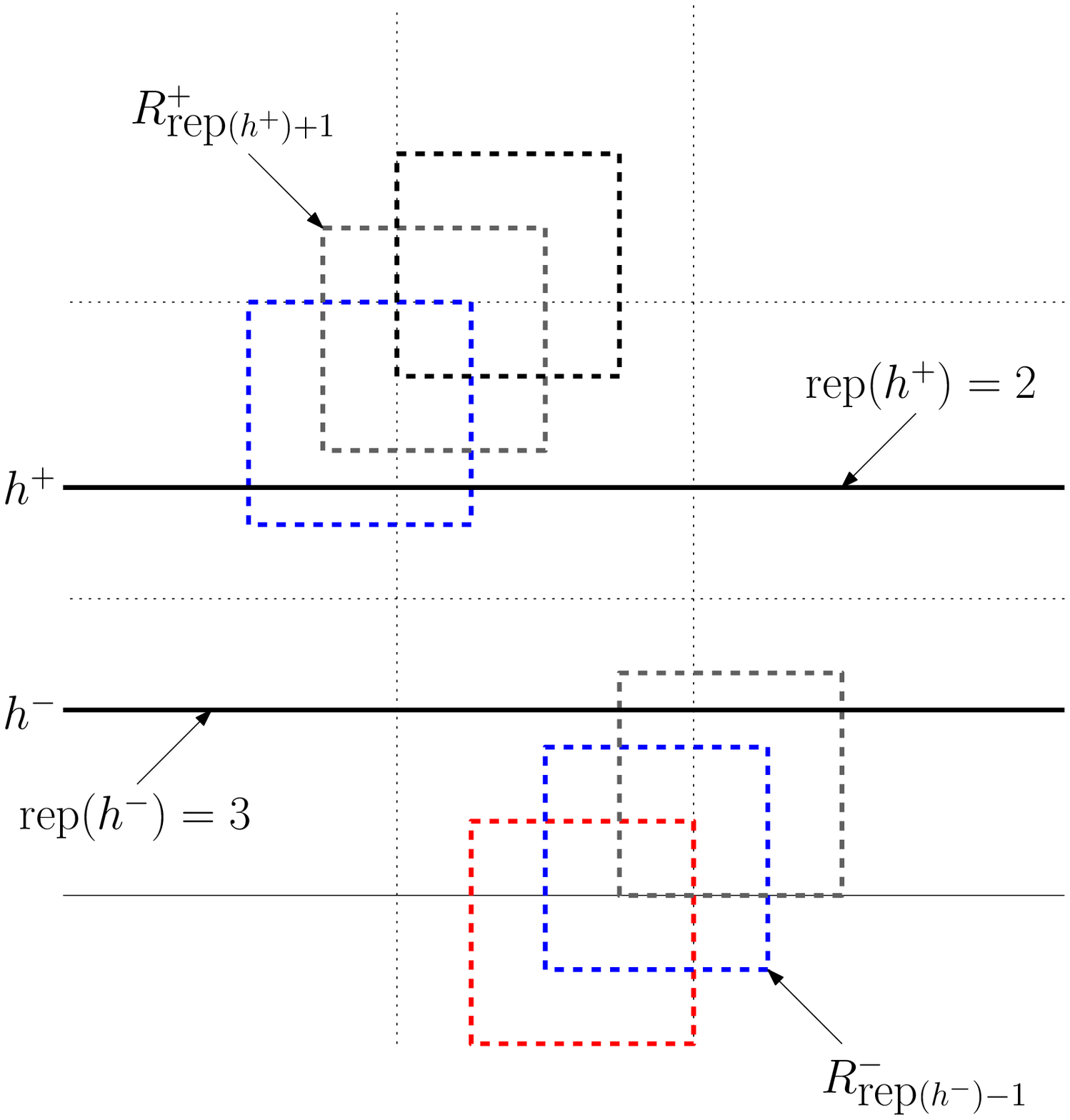}
		\caption{An inconsistent selection}
		\label{fig:ConsistencyGadgetFailed}
	\end{minipage}%
	\begin{minipage}[t]{0.5\textwidth}
		\centering \includegraphics[scale=.45]{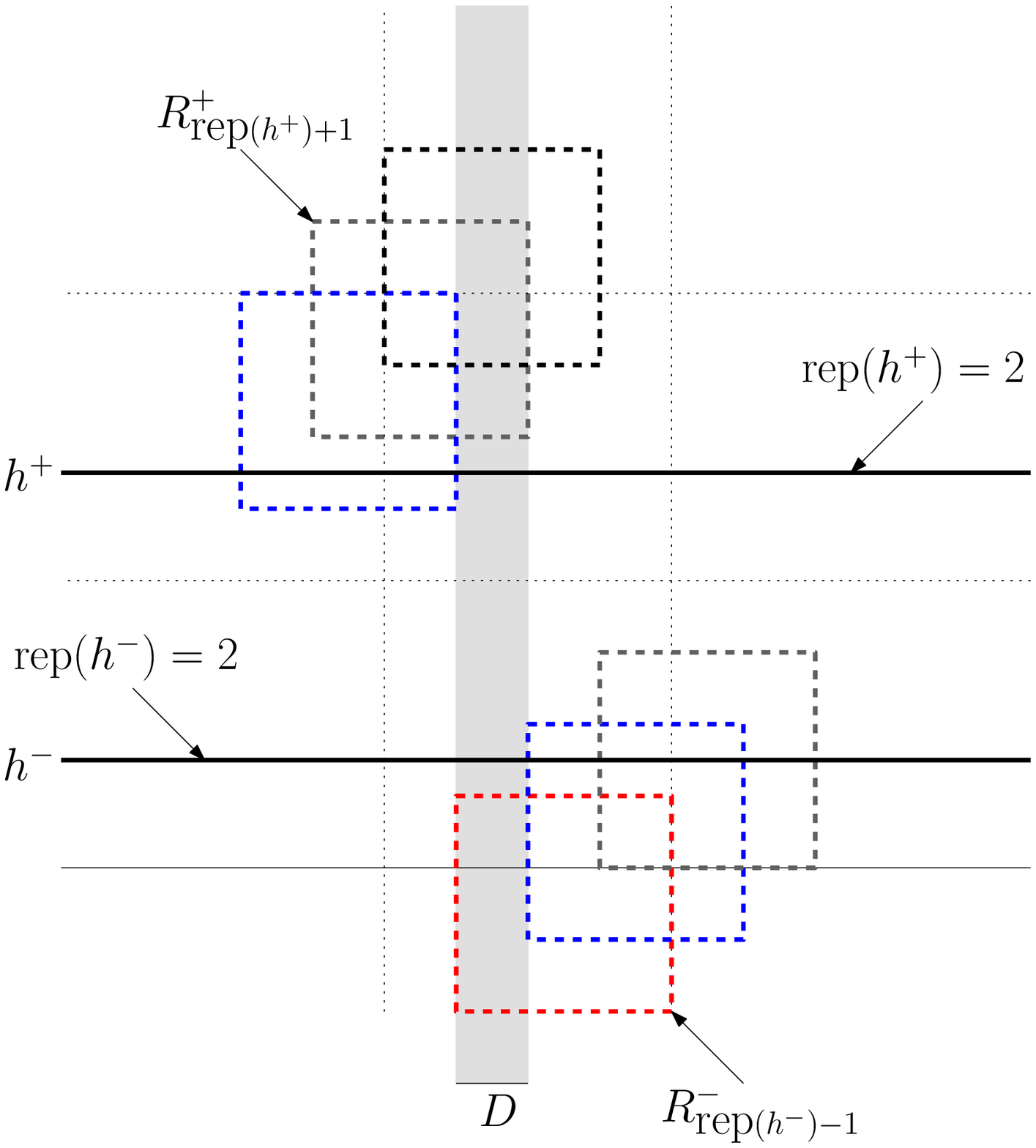}
		\caption{A consistent selection}
		\label{fig:ConsistencyGadgetSuccess}
	\end{minipage}
\end{figure} 
In particular, all squares in a $C$--gadget are intersected if the three lines in $S_h^-, S_h^+, S_{C_h}$ all represent the same vertex.

For the sake of completeness, we give the exact coordinates of the $C_v$--gadgets:
\[ \{ \square_{n-1}(i - n + 1, i) \mid 1 \leq i \leq n-1 \}
 \cup \{ \square_{n-1}(n + i - 1, i - n) \mid 2 \leq i \leq n \}. \]

\paragraph{The Construction}
We now come to describe the exact placement of the gadgets. The main part, expressing the adjacency relation of the graph, will be a $k \times k$ grid of $A$-- and $D$--gadgets:
\[ \mathcal A := \{ A_{i, j} := A(i \cdot 3n, j \cdot 3n) \mid 1 \leq  i \neq j \leq k \}\]
\[ \mathcal D := \{ D_i := D(i \cdot 3n, i \cdot 3n) \mid 1 \leq  i \leq k \}\]
Around this grid, we add the $C$--gadgets to allow only specific solutions:
\[ \mathcal C_h := \{ C_h^i := C_h(-i \cdot 3n, i \cdot 3n) \mid 1 \leq i \leq k\}\]
\[ \mathcal C_v := \{ C_v^i := C_v(i \cdot 3n, -i \cdot 3n) \mid 1 \leq i \leq k\}\]
Here it becomes clear why we chose the coordinates as multiples of $3n$: The $C$--gadgets now cannot influence each other, i.e., no square from one such gadget intersects any strip belonging to another $C$--gadget.
 
Finally, we place the $F$--gadgets to force lines in the desired strips as follows: For the double strips, the lines are forced by
\[ \mathcal S_h^- := \{ (S_h^-)^i := F_h(-3n \cdot (k + 1), i \cdot 3n) \mid 1 \leq i \leq k \}, \]
\[ \mathcal S_h^+ := \{ (S_h^+)^i := F_h(-3n \cdot (k + 1), i \cdot 3n + n) \mid 1 \leq i \leq k \}, \] 
and
\[ \mathcal S_v^- := \{ (S_v^-)^i := F_v(i \cdot 3n, -3n \cdot (k + 1) \mid 1 \leq i \leq k \} \]
\[ \mathcal S_v^+ := \{ (S_v^+)^i := F_v(i \cdot 3n + n, -3n \cdot (k + 1)) \mid 1 \leq i \leq k \}. \] 
The additional lines for the $C$--gadgets are forced by
\[ \mathcal S_{C_h} := \{ S_{C_h}^i := F_v(-i \cdot 3n, -3n \cdot (k + 1)) \mid 1 \leq i \leq k \} \]
and
\[ \mathcal S_{C_v} := \{ S_{C_v}^i := F_h(-3n \cdot (k + 1), -i \cdot 3n) \mid 1 \leq i \leq k \}. \]
The entire construction is shown in Figure \ref{fig:FinalAPConstruction}, where the three regions $(S_h^-)^1, (S_h^+)^1, S_{C_h}^1$ belonging to $C_h^1$ are indicated.
\begin{figure}
	\centering
		\includegraphics[scale=.6]{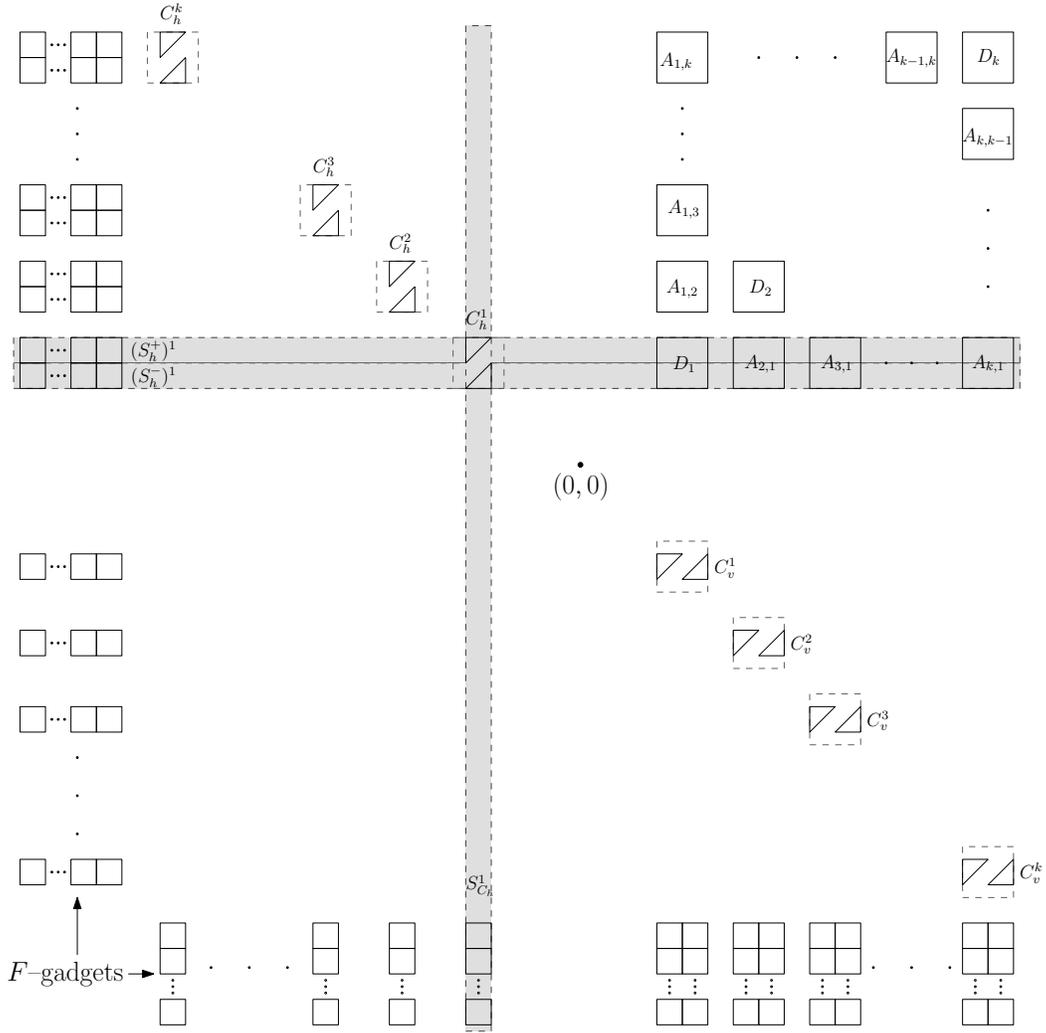}
	\caption{The Final Construction}
	\label{fig:FinalAPConstruction}
\end{figure}
The set
\[ \calS'(g, k) = \mathcal A \cup \mathcal D \cup \mathcal C_h \cup \mathcal C_v \cup \mathcal S_h^- \cup \mathcal S_h^+ \cup \mathcal S_v^- \cup \mathcal S_v^+ \cup \mathcal S_{C_h} \cup \mathcal S_{C_v} \] 
is of size $O(n^2k^2)$ and takes time polynomial in both $n$ and $k$ to create.
\\
It has the following property:
\begin{lemma}
  \label{lem:mainLemma}
  $\mathcal{S}'(G,k)$ can be stabbed by $6k$ axis-parallel lines if and only if
  $G$ has a $k$--clique.
\end{lemma}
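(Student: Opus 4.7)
The plan is a standard two-direction proof that rests almost entirely on the three gadget lemmas already established; only one short coverage computation inside the $D$--gadget requires a fresh calculation.

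For the easy direction, starting from a $k$--clique $\{v_1,\dots,v_k\}$ of $G$, I would pick six lines per index $i$: two horizontal lines through the double strip $S_h^i$ (one in the negative, one in the positive half) both representing $v_i$, two vertical lines through $S_v^i$ both representing $v_i$, one vertical line through $S_{C_h}^i$ representing $v_i$, and one horizontal line through $S_{C_v}^i$ representing $v_i$. This is exactly $6k$ axis-parallel lines. Verification is routine: every $F$--gadget is stabbed because its forced strip contains one of the chosen lines; every $A_{i,j}$ with $i\neq j$ is stabbed by Lemma~\ref{Lemma:AdjacencyLemma} since $(v_i,v_j)\in E$; every $D_i$ is stabbed by the same lemma applied to the identity adjacency, since both the horizontal and vertical line pairs through $D_i$ represent $v_i$ and $(v_i,v_i)$ lies in the identity relation; and every $C_h^i$, $C_v^i$ is stabbed by Lemma~\ref{Lemma:ConsistencyLemma}, which applies immediately because the two antipodal lines in each consistency gadget represent the same vertex.

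For the converse, suppose $\calS'(G,k)$ is stabbed by $6k$ lines. The $6k$ $F$--gadgets in the construction have pairwise disjoint forced strips, so Proposition~\ref{Lemma:Forcing} puts exactly one chosen line into each such strip, completely pinning down the structure of the solution. Let $a_i,b_i$ be the $\rep$--values of the two horizontal lines through $S_h^i$ and $c_i,d_i$ those of the two vertical lines through $S_v^i$. Applying Lemma~\ref{Lemma:ConsistencyLemma} to $C_h^i$ and $C_v^i$ yields $a_i\le b_i$ and $c_i\le d_i$. The critical step is to analyse $D_i$: because the gadgets are spaced at multiples of $3n$, only the four lines in $S_h^i$ and $S_v^i$ can stab any square of $D_i$. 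A direct computation of which squares each of these four lines covers shows that the squares left uncovered are exactly $\{\square_{n-1}(i',j') : i'\in[c_i,d_i],\ j'\in[a_i,b_i],\ i'\neq j'\}$, and the requirement that this set be empty in $D_i$ forces $a_i=b_i=c_i=d_i$; call this common value $x_i$. For each pair $i\neq j$, Lemma~\ref{Lemma:AdjacencyLemma} applied to $A_{i,j}$ then yields $(x_i,x_j)\in E$, and because $G$ has no loops $x_i\neq x_j$. Hence $\{x_1,\dots,x_k\}$ is a $k$--clique of $G$.

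The only place where genuine work is needed is the coverage computation inside $D_i$ that upgrades the consistency inequalities $a_i\le b_i$, $c_i\le d_i$ into the four-way equality $a_i=b_i=c_i=d_i$; the remainder is bookkeeping on top of the gadget lemmas and the $3n$-spacing that isolates the gadgets from each other.
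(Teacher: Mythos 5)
Your proof is correct, and the forward direction together with the $F$--gadget counting in the converse matches the paper's argument exactly. The one place you diverge is how the converse upgrades the consistency inequalities to equalities. The paper first argues that one may \emph{assume} $\rep(l^-)=\rep(l^+)$ for every antipodal pair, via an exchange step: shrinking the gap between the two parallel lines (moving the positive line down to represent $\rep(l^-)$) only enlarges the set of squares stabbed in the $A$-- and $D$--gadgets, and Lemma~\ref{Lemma:ConsistencyLemma} guarantees the third line of the $C$--gadget can be re-chosen accordingly; only after this normalization does it invoke $D_j$ to identify the horizontal and vertical pairs and $A_{j,m}$ for adjacency. You instead skip the normalization entirely and read everything off a direct coverage computation in $D_i$: with $a_i\le b_i$ and $c_i\le d_i$ from the $C$--gadgets (which you correctly note are needed, since otherwise the uncovered rectangle of indices would be empty and $D_i$ would impose nothing), the uncovered squares of $D_i$ are exactly those $\square_{n-1}(i',j')$ with $i'\in[c_i,d_i]$, $j'\in[a_i,b_i]$, $i'\neq j'$, and emptiness of this set forces all four representatives to coincide. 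Your computation checks out, and it buys a small simplification: the paper's ``we can assume'' step requires verifying a monotonicity property of the solution under modification, whereas your argument works directly with the given solution and establishes $a_i=b_i$, $c_i=d_i$, and $a_i=c_i$ in one stroke; the price is that you need the explicit characterization of which squares each of the four lines covers rather than a single application of the already-proved Lemma~\ref{Lemma:AdjacencyLemma}. The remainder (applying Lemma~\ref{Lemma:AdjacencyLemma} to $A_{i,j}$ and using loop-freeness to get distinctness) is identical to the paper.
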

\begin{proof}
	Observe that the horizontal as well as the vertical $F$--gadgets are pairwise disjoint, so by Lemma \ref{Lemma:Forcing}, at least one line in the corresponding direction is needed for each of them.
  Thus, in any solution there have to be at least $6k$ lines.  

  Let $G$ have a $k$--clique $C=\{i_1, \dots, i_k\}$. First, we choose $4k$ lines as follows: For $1 \leq j \leq k$, we	choose the line pairs $h_j = (h_j^-, h_j^+)$ (horizontal) and $v_j = (v_j^-, v_j^+)$ (vertical) in the strips $(S_h^-)^j, (S_h^+)^j$ and $(S_v^-)^j, (S_v^+)^j$, respectively, such that they are antipodal and correspond to the vertex $i_j$.
  
  Then we have, for parallel lines, that
  $\rep(l_j^-) = \rep(l_j^+)$ ($l \in \{h, v\}$) and thus we can apply Lemma
  \ref{Lemma:ConsistencyLemma}, i.\,e., the squares left in the $2k$ $C$--gadgets can be
  intersected by $2k$ additional lines. By Lemma \ref{Lemma:AdjacencyLemma}, all the squares
  inside $A_{j, m}$ are intersected, as $(i_j, i_m) \in E$ for all $j
  \neq m$. Further, as $h_j$ and $v_j$ represent
  the same vertices, all $D$--gadgets are also stabbed. 
  Thus, $6k$ lines suffice.

  Now assume that the set can be stabbed by $6k$ axis-parallel lines. Because of the
  $F$--gadgets, through each $A$-- and $D$--gadget there must be exactly
  two antipodal horizontal and two antipodal vertical lines. Also, through each $C$--gadget there are exactly three lines, two of which are parallel.\\ Further, by Lemma \ref{Lemma:ConsistencyLemma} we have for each such antipodal pair $l_j^-,
  l_j^+$ of lines in the same double strip that $\rep(l_j^+) \geq \rep(l_j^-)$, 
  for otherwise the corresponding $C$--gadget would not be stabbed. We can
  assume that $\rep(l_j^-) = \rep(l_j^+)$ for all $1 \leq j \leq k$: enlarging the gap between the two antipodal parallel 
  lines can only reduce the set of squares that are intersected in the $A$-- and $D$--gadgets, and, by Lemma \ref{Lemma:ConsistencyLemma}, the additional line in the $C$ gadgets can then be chosen to represent $\rep(l_j^-)$, too. Each such pair of lines thus corresponds
  to a node in $G$ ($P_0$). Let $C=\{i_1, \dots, i_k\}$ be the nodes
  represented by the horizontal line pairs and $C'=\{i_1', \dots, i_k'\}$
  the nodes represented by the vertical line pairs. By Lemma \ref{Lemma:AdjacencyLemma}, the gadget $D_j$ ensures that $i_j = i_j'$ for all $1 \leq j \leq k$ ($P_2$), and thus we have $C = C'$. Further, the gadget $A_{j, m}$
  ensures that $(i_j, i_m) \in E$ for all $j \neq m$ ($P_1$), which also
  implies $i_j \neq i_m$ for all $j \neq m$ as the graph contains no loops. But this means that $C$ forms a $k$--clique in $G$.
  
\end{proof}

\paragraph{Adaption to Unit Squares}
To make all the squares have a side length of $n-1$, we simply shrink the squares inside the $F$--gadgets by $1/2$ from each side, i.\,e. we redefine the $F$--gadgets as 
\begin{itemize}
	\item $F_h := \{ \square_{n-1}(-in + 1/2, 1/2) \mid 1 \leq i \leq 6k+1 \}$ and 
	\item $F_v := \{ \square_{n-1}(1/2, -in + 1/2) \mid 1 \leq i \leq 6k+1 \}$.
\end{itemize}
and define $\calS(G, k)$ accordingly. The only lines influenced by this are the ones that represent either $1$ or $n$. Because all the lines that represent $1$ in a gadget $F_h$ intersect the same squares in $\calS'(G, k)$, we can assume that any such line in a solution is of the form $y = 3/4$. The same argument holds for the lines that represent $n$, i.\,e., they can assumed to be of the form $y = n - 3/4$; again, the same holds analogously for vertical lines. Thus, if there is a solution of size $6k$ for $\calS'(G, k)$, then there is also one for $\calS(G, k)$. This completes the proof of Thm. \ref{thm:wonehardunitsquaresap}.

\subsubsection{Arbitrary directions}\label{sssec:arbitrary}

So far our results depended on the lines being parallel to the coordinate axis. In this section, starting with the set $\calS(G, k)$ of \ap unit squares from section \ref{sssec:wonehardUnitSquaresAP}, we show how to modify this construction to yield a set $\calS^*(G, k)$ of \ap unit squares that works for the case where the lines can lie in arbitrary directions. Observe that, while intuitively plausible, it is not a priori clear that this problem is also \wone--hard just because the problem for \ap lines is hard.

The proof that this problem is hard is more technical than above, even though the idea remains the same. The main task will be to modify the set $\calS(G, k)$ in such a way that the lines in any solution must be ``almost'' axis--parallel. This will be done by increasing the number of squares the $F$--gadgets and shrinking the squares a little. With this it will be possible to show that all the almost \ap lines have an equivalent \ap line. 

To make calculations easier, we first modify $\calS(G, k)$ by applying the linear function that scales in $x$-- and $y$--direction by $1/n$. If we now refer to $\calS(G, k)$, we mean the scaled set. All the squares in this set have side length $u := (n - 1)/n = 1 - 1/n$. The vertex--strips for $2, \dots, n-1$ then have a width of $s := 1/n$, and the vertex--strips for $1$ and $n$ have a width of $s/2$. 

\paragraph{Shrinking the Squares}
\emph{To shrink a square by $\eps$} means that we replace a square $\square_l(x, y)$ by $\square_{l - 2\eps}(x + \eps, y + \eps)$, i.e., shrink it from each side by a value of $\eps$. We begin with the definition of $\delta$--robustness which will prove to be very useful in the following argumentation.
\begin{definition} A set $S$ of squares is called \textit{$\delta$--robust}, if
\[ \forall R \subseteq S: \bigcap_{r \in R}\pr_d(r) \neq \emptyset \follows \diam\left(\bigcap_{r \in R}\pr_d(r)\right) \geq 2\delta \]
for $d \in \{x, y\}$.
\end{definition}
A set that is $\delta$--robust can be altered a little without ``destroying'' any solutions. The following lemma will be used in its full strength in the next section. During this section, we will only consider modifications that shrink the squares. 
\begin{lemma}\label{Lemma:DeltaRobust} Let $S$ be a $\delta$--robust set of \ap unit squares that can be stabbed by $k$ \ap lines. If we translate each square by a value at most $\tau$ and shrink each square by a value $\sigma$ such that $\tau + \sigma < \delta$, then the resulting set still can be stabbed by $k$ \ap lines. Further, if all the squares are shrinked by $\sigma < \delta$ (and not translated), then the resulting set is $(\delta - \sigma)$--robust.
\end{lemma}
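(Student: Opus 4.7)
The plan is to exploit $\delta$-robustness by observing that for each axis-parallel line in a stabbing solution, the set of admissible coordinates (parallel to the stabbing direction) is exactly the one-dimensional intersection of the relevant projections of the squares it stabs; so as long as this intersection does not become empty under the perturbation, some valid replacement line survives.

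For the first assertion, I would start with a stabbing solution $L$ of size $k$ for $S$. For each line $\ell \in L$, let $R_\ell \subseteq S$ be the squares stabbed by $\ell$, and let $d \in \{x, y\}$ be the direction perpendicular to $\ell$; then $\ell$ corresponds to a single value in $\bigcap_{r \in R_\ell} \pr_d(r)$, which is nonempty and, by $\delta$-robustness, has diameter at least $2\delta$. The key observation is that under the allowed modification, a projection $\pr_d(r) = [a_r, b_r]$ is replaced by an interval of the form $[a_r + \sigma + t_r, b_r - \sigma + t_r]$ with $|t_r| \leq \tau$; so the intersection of the new projections contains the old intersection with at most $\sigma + \tau$ removed from each end. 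Since $\sigma + \tau < \delta$, this new intersection has diameter at least $2\delta - 2(\sigma + \tau) > 0$, hence is nonempty, and any point in it yields a line parallel to $\ell$ that stabs every modified square in $R_\ell$. Repeating this for all $\ell \in L$ produces a stabbing of the modified set by $k$ axis-parallel lines.

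For the second assertion, the argument is essentially the same calculation without the translation. When each square is only shrunk by $\sigma$, every projection interval $[a_r, b_r]$ becomes $[a_r + \sigma, b_r - \sigma]$, so for any subset $R$ of the modified set the intersection of new projections is exactly the original intersection with $\sigma$ trimmed from each end. In particular, if the new intersection is nonempty then so is the original; by $\delta$-robustness of $S$ the original has diameter at least $2\delta$, and therefore the new one has diameter at least $2(\delta - \sigma)$, establishing $(\delta - \sigma)$-robustness.

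I do not anticipate a real obstacle here; the only points to watch are applying the robustness hypothesis independently in each coordinate direction (so that both horizontal and vertical lines are handled uniformly), and verifying precisely how a shrink-then-translate operation acts on a projection interval. Both are routine once the definitions are unwound.
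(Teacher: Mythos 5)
Your proof is correct and follows essentially the same route as the paper: both arguments reduce stabbing by an axis-parallel line to nonemptiness of the intersection of the perpendicular projections, use $\delta$-robustness to lower-bound its diameter by $2\delta$, observe that shrinking and translating trims at most $\tau+\sigma$ from each end, and conclude via $2\delta - 2(\tau+\sigma) > 0$; the second assertion is the same trimming computation. The only cosmetic difference is that the paper fixes the midpoint of the original intersection as the surviving line, whereas you take an arbitrary point of the new intersection.
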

\begin{proof} For a set $R \subseteq S$, let $R^*$ denote the modified set. Obviously, for any set of squares $R$ we have $\bigcap_{r \in R}\pr_d(r) \neq \emptyset$, $d \in \{x, y\}$, if and only if there exists an \ap line that stabs all squares from $R$. We show that the modified set $R^*$ still lies on a common \ap line. Let $l$ be horizontal and 
\[ y_{min} := \inf \bigcap \{ \pr_y(r) \mid r \in R \}, y_{max} :=  \sup \bigcap \{ \pr_y(r) \mid r \in R \}. \] 
The line $l'\co y = y_{min} + \frac{1}{2}\left(y_{max} - y_{min}\right)$ intersects all the squares from $R$. Further, $y_{max} - y_{min} \geq 2\delta$, as the set is $\delta$--robust. Thus, after shrinking and translating the squares in $R$ by a value of at most $\tau$ and $\sigma$, respectively, for the corresponding values $y_{min}^*, y_{max}^*$ of the modified set $R^*$ we still have 
\[ y_{max}^* - y_{min}^* \geq y_{max} - (\tau + \sigma) - \left(y_{min} + (\tau + \sigma) \right) = 2\delta - 2(\tau + \sigma) > 0. \]
Thus, $l'$ stabs all the squares from $R^*$.
Again, the same argument works for vertical lines as well.

To prove the second part, observe that 
\begin{eqnarray*} \bigcap_{r^* \in R^*}\pr_d(r^*) \neq \emptyset & \follows & \bigcap_{r \in R}\pr_d(r) \neq \emptyset\\
& \follows & \diam\left(\bigcap_{r \in R}\pr_d(r)\right) \geq 2\delta\\
& \follows & \diam\left(\bigcap_{r^* \in R^*}\pr_d(r^*)\right) \geq 2\delta - 2\sigma
\end{eqnarray*}
for $d \in \{x, y\}.$

\end{proof}
See Figure \ref{fig:ShrinkingSquares}.
\begin{figure}[ht]
	\centering
		\includegraphics[scale=.4]{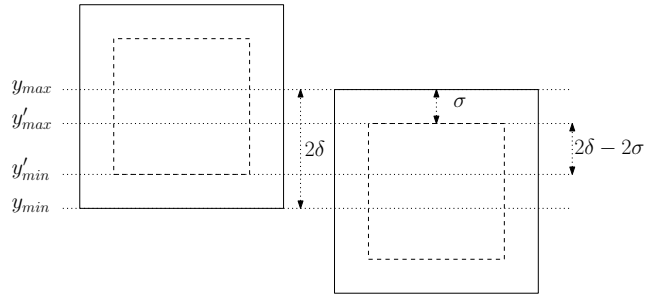}
	\caption{Shrinking the squares (here, $\tau = 0$)}
	\label{fig:ShrinkingSquares}
\end{figure}
(Observe that in general the reverse is not true.)

We will now modify the set $\calS(G, k)$ to yield a set $\calS^*(G, k)$ in two steps as follows:
\begin{enumerate}
	\item The $F$--gadgets are enlarged to now contain $N := n^2$ squares, i.e. we set
	\[ F_h := \{ \square_u(-i + s/2, s/2) \mid 1 \leq i \leq N \} \]
	and
	\[ F_v := \{ \square_u(s/2, -i + s/2) \mid 1 \leq i \leq N \}.\]
        (Recall that we have scaled the set $\calS(G, k)$ by $s = 1/n$ to contain squares of length $u$).
	\item In the resulting set, all the squares are \emph{shrinked} by $\eps := s/6 = 1/(6n)$.
\end{enumerate}

The resulting set then consists of unit squares with side length $u^* := 1 - 1/n - 2\eps$. 
We will make use of the following observation, which is easy to check:

\textbf{Observation 1: }\emph{For any two squares $r, r'$ from $\calS^*(G, k)$ and $d \in \{x, y\}$, we have that \[ \pr_d(r) \cap \pr_d(r') = \emptyset \follows \dist(\pr_d(r), \pr_d(r')) \geq 2\eps. \] }
That means that if two squares cannot be interesected by, e.g., a common vertical line, then there is a horizontal distance of at least $2\eps$ between them.
Lemma \ref{Lemma:DeltaRobust} is used to prove the following property of our set $\calS^*(G, k)$:
\begin{lemma}\label{Lemma:ShrinkingLemma}
The set $\calS^*(G, k)$ can be stabbed by $6k$ \ap lines if and only if $\calS(G, k)$ can be stabbed by $6k$ \ap lines.
\end{lemma}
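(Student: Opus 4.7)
The plan is to decompose the two modifications that produce $\calS^*(G,k)$ from $\calS(G,k)$. Introduce an intermediate set $\calS^{**}(G,k)$ obtained from $\calS(G,k)$ by performing only the $F$-gadget enlargement (step~1) and omitting the shrinking. Then $\calS \subseteq \calS^{**}$ (step~1 only adds squares), and $\calS^*$ is exactly $\calS^{**}$ with every square shrunk by $\eps = s/6$.

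The direction ``$\calS^*$ stabbable $\Rightarrow$ $\calS$ stabbable'' requires no additional tool: a line that crosses a shrunk square also crosses its unshrunk original, so a $6k$-line solution for $\calS^*$ stabs $\calS^{**}$ and hence its subset $\calS$.

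For the reverse direction, the strategy is to take a $6k$-line solution $L$ for $\calS$, lift it to a solution for $\calS^{**}$ for free, and then descend to $\calS^*$ via Lemma~\ref{Lemma:DeltaRobust}. Lifting is immediate because Proposition~\ref{Lemma:Forcing} already forces $L$ to contain, for each $F$-gadget of $\calS$, an axis-parallel line lying in the width-$u$ strip shared by all squares of that gadget; every square added in the enlargement sits in precisely the same strip, so $L$ stabs those too, and the non-$F$ gadgets of $\calS^{**}$ are identical to those of $\calS$. Lemma~\ref{Lemma:DeltaRobust} with $\tau = 0$ and $\sigma = \eps$ then converts $L$ into a $6k$-line solution for $\calS^*$, provided $\calS^{**}$ is $\delta$-robust for some $\delta > \eps$.

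The main technical step is to establish this $\delta$-robustness. After the scaling by $1/n$, every square has side length $u = 1 - s$; the lower-left corners of the non-$F$ squares have coordinates in $\tfrac{1}{n}\mathbb{Z}$, while the $F$-gadget squares are shifted by the half-offset $s/2$. Hence every nonempty intersection of $x$- or $y$-projections of a subset of squares is an interval whose length is $u$ minus an integer multiple of $s/2$, so its smallest positive value is $s/2$. This gives $\delta = s/4$, and since $\tau + \sigma = \eps = s/6 < s/4 = \delta$ the hypothesis of Lemma~\ref{Lemma:DeltaRobust} is satisfied, closing the argument. The slack between $s/6$ and $s/4$ also foreshadows why this particular $\eps$ will be small enough to absorb the additional translations needed in the arbitrary-direction construction of Section~\ref{sssec:arbitrary}.
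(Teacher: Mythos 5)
Your proposal is correct and follows essentially the same route as the paper: the forward direction by containment of the shrunk squares in the originals, and the reverse direction by observing that enlarging the $F$--gadgets costs nothing (a forced line in the common strip stabs all squares of the gadget, new or old) and then invoking the $s/4$--robustness of the set together with Lemma~\ref{Lemma:DeltaRobust} and $\eps = s/6 < s/4$. The only difference is that you explicitly verify the $s/4$--robustness via the $\tfrac{s}{2}\mathbb{Z}$ grid of corner coordinates, a claim the paper asserts without proof.
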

\begin{proof}
First observe that if we are only considering solutions of size $6k$ with \ap lines, then it does not matter whether the $F$--gadgets consist of $6k+1$ or $N$ squares.\\
``$\Rightarrow$'': The squares from $\calS(G, k)$ all contain a square from $\calS^*(G, k)$, thus any solution to $\calS^*(G,k)$ is a solution to $\calS(G, k)$.\\
``$\Leftarrow$'': By the construction of $\calS(G, k)$, it is $s/4$--robust and $\eps = s/6 < s/4$. Thus, we can apply Lemma \ref{Lemma:DeltaRobust}. 
\end{proof}
By $T^*$ we denote the modified version of gadget $T$, e.\,g., $A^*$ is the $A$--gadget with the squares shrinked as described above. The following proposition is used to show that in any solution of size $6k$ the lines have to be almost parallel to the axis. 
\begin{proposition}\label{Proposition:MaxStab}
A line $l\co ax + by = c$ can intersect at most $\rup{|b/a|} + 1$ squares of a single $F_h^*$ gadget and at most $\rup{|a/b|} + 1$ of a single $F_v^*$--gadget.
\end{proposition}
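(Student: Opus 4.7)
I will prove the claim for $F_h^*$; the $F_v^*$ bound follows by the symmetric argument with $x$ and $y$ interchanged. If $a=0$ the line is horizontal, $|b/a|$ is infinite, and the bound is vacuous, so I may assume $a\neq 0$. The decisive structural fact I intend to exploit is that every square of $F_h^*$ shares the same $y$-extent, namely the strip $I_y := [s/2+\eps,\,s/2+u-\eps]$ of height $u^* := u-2\eps$, while its $x$-extent has length $u^*$ and consecutive $x$-centers are spaced exactly $1$ apart along the horizontal axis.

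Next, I would restrict $\ell : ax+by=c$ to the slab $\R\times I_y$. Since $\ell$ has slope $-a/b$, the $x$-projection of $\ell\cap(\R\times I_y)$ is an interval $[X_-,X_+]$ of length $L=u^*|b/a|$. The crucial observation is that, because each square of $F_h^*$ occupies the full strip in the $y$-direction, $\ell$ stabs such a square if and only if $[X_-,X_+]$ intersects that square's $x$-extent. Since the $x$-extents are pairwise disjoint intervals laid out along $x$ and $[X_-,X_+]$ is connected, the stabbed squares necessarily form a consecutive run in $x$-order.

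To count them, suppose $\ell$ stabs exactly $k$ such consecutive squares, whose $x$-centers $c_1<c_2<\cdots<c_k$ satisfy $c_{i+1}-c_i=1$. Overlap of $[X_-,X_+]$ with both the leftmost and the rightmost of these extents forces $X_-\leq c_1+u^*/2$ and $X_+\geq c_k-u^*/2$, hence $L \geq (k-1) - u^*$. Substituting $L = u^*|b/a|$ and using $u^*<1$ gives $k<|b/a|+2$, and as $k$ is an integer this is equivalent to $k\leq\rup{|b/a|}+1$. The one point requiring care is the inequality $u^*<1$, which is what guarantees a genuine $x$-gap between adjacent squares and thus justifies the ``consecutive run'' step; everything else is a short calculation, so I do not anticipate a serious obstacle.
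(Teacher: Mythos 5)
Your proof is correct and follows essentially the same route as the paper's: both arguments use the fact that all squares of an $F_h^*$--gadget lie in a common horizontal strip of height $u^*<1$, so the $x$-range over which the line can meet the gadget has length at most $u^*\,|b/a|<|b/a|$, and then count how many disjoint, unit-spaced squares fit in such a window. You merely carry out the final counting step more explicitly than the paper does.
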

\begin{proof} For any two points $(x, y)$ and $(x', y')$ where the line stabs a square from an $F_h^*$--gadget, we must have $|y - y'| < u^*$, which means $|x - x'|\cdot |(a/b)| < u^* < 1$ and thus $|x - x'| < |b/a|$. Thus, as the squares inside the $F_h^*$--gadget are all disjoint, at most $\rup{|b/a|} + 1$ of them can be stabbed by such a line. Rotation by 90 degrees shows that for the $F_v^*$--gadgets at most $\rup{|a/b|} + 1$ squares can be stabbed.

\end{proof}

To prove the main property of the lines, we first only consider the set of $6k$ $F^*$--gadget and do not add the $A^*$--, $D^*$--, and $C^*$--gadgets yet. 

As all the squares in $\calS^*(G, k)$ are placed between $x_l = -\left(3(k + 1) + N\right)$, $x_r = 3k + 3$, $y_b = -\left(3(k + 1) + N\right)$, and $y_t = 3k + 3$, it suffices to consider the behaviour of the lines inside the region $(x_l, x_r) \times (y_b, y_t)$.
Then the following holds:
\begin{lemma}\label{Lemma:SlopeLemma} In order to stab the $6k$ $F^*$--gadgets with $6k$ lines in arbitrary directions, each of the lines has to intersect a single $F^*$--gadget entirely.
\end{lemma}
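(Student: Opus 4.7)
The plan is a counting argument based on Proposition~\ref{Proposition:MaxStab}. For each line $l \colon a_l x + b_l y = c_l$ of the supposed solution, set $\alpha_l := \lceil |b_l/a_l| \rceil + 1$ and $\beta_l := \lceil |a_l/b_l| \rceil + 1$ (with the natural conventions for a horizontal or vertical line). By Proposition~\ref{Proposition:MaxStab}, $l$ stabs at most $\alpha_l$ squares of any single $F_h^*$-gadget and at most $\beta_l$ squares of any single $F_v^*$-gadget, so across the $3k$ $F_h^*$-gadgets and the $3k$ $F_v^*$-gadgets together it stabs at most $3k(\alpha_l + \beta_l)$ squares in total. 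Summing over the $6k$ lines and using that the $6kN$ squares of the $F^*$-gadgets must all be stabbed yields the aggregate inequality $\sum_l (\alpha_l + \beta_l) \ge 2N$.

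I would next argue that each line fully stabs at most one $F^*$-gadget. Fully stabbing all $N$ squares of one $F_h^*$-gadget requires $\alpha_l \ge N$, i.e.\ $|b_l/a_l| > N-2$; but any two $F_h^*$-gadgets have disjoint horizontal $y$-strips, and a non-vertical line lies in at most one such strip at any given $x$-coordinate, so at most one $F_h^*$-gadget can be fully covered (vertical lines have $\alpha_l = 1 < N$ and so stab no $F_h^*$-gadget fully). Full coverage of an $F_v^*$-gadget analogously requires $\beta_l \ge N$, i.e.\ $|a_l/b_l| > N-2$; but $\alpha_l, \beta_l \ge N$ simultaneously would force $1 = |b_l/a_l|\cdot|a_l/b_l| > (N-2)^2$, which is impossible for $N \ge 3$. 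Hence the partial map $\phi$ sending a line to the unique $F^*$-gadget it fully covers is well-defined.

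The main obstacle is to show that $\phi$ is defined on \emph{every} one of the $6k$ lines. For this I would exploit the specific placement of the two clusters: the $F_h^*$-gadgets all lie in a thin horizontal strip on the far left and the $F_v^*$-gadgets all lie in a thin vertical strip far below, at a mutual distance of order~$N$. A line with moderate slope (neither almost horizontal nor almost vertical) can only briefly traverse each cluster and therefore stabs comparatively few squares in each; together with the aggregate bound $\sum_l (\alpha_l + \beta_l) \ge 2N$, a careful case analysis of the total contribution of such a ``bad'' line (one with $\phi(l)$ undefined) shows that it is always strictly less than what is needed to make up for a gadget left without a fully-dedicated line. A final application of the pigeonhole principle to $6k$ lines and $6k$ gadgets then forces every line to be almost axis-parallel and hence, by the previous paragraph, to fully cover exactly one $F^*$-gadget, yielding the desired bijection.
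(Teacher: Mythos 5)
There is a genuine gap. The heart of this lemma is a \emph{per-line} upper bound: every line, whatever its direction, stabs at most $N$ squares from the union of all $6k$ $F^*$--gadgets, and it reaches $N$ only by covering one gadget entirely; since $6k$ lines must cover $6kN$ squares, each line is then forced to achieve exactly $N$. Your aggregate inequality $\sum_l(\alpha_l+\beta_l)\ge 2N$ cannot substitute for this: for a near--horizontal line $\alpha_l=\lceil|b_l/a_l|\rceil+1$ is unbounded, so $3k(\alpha_l+\beta_l)$ wildly overcounts (a single such line makes the inequality vacuously true while saying nothing about the other $6k-1$ lines), and the bound from Proposition~\ref{Proposition:MaxStab} must in any case be capped at $N$ per gadget since a gadget only has $N$ squares. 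The entire difficulty is therefore concentrated in the step you defer to ``a careful case analysis of the total contribution of such a bad line,'' which is precisely the content of the lemma and is not carried out.

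What that analysis actually requires, and what the paper supplies, is three quantitative facts: (a) a line whose slope is bounded away from $0$ and $\infty$ (concretely $4k/N<|m|\le 1$) stabs at most $\tfrac34 N+12k<N$ squares in total, by summing the capped per-gadget bounds over all $3k+3k$ gadgets; (b) a near-horizontal line cannot accumulate $N$ squares from two different $F_h^*$--gadgets, because Observation~1 (the $2\eps$ gap between $y$-disjoint squares, created by the shrinking) forces a horizontal travel of at least $6k+3$ between squares of different gadgets, costing more squares than a second gadget can contribute; and (c) a near-horizontal line that stabs $N-6k$ squares of one $F_h^*$--gadget is pinned down in $y$ so tightly that it cannot reach any $F_v^*$--gadget, which lies far below. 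Your step showing that no line can \emph{fully} cover two gadgets is fine but addresses the easy direction; without (a)--(c) you have not excluded a line that covers no gadget fully yet still collects many squares from several gadgets, nor shown that the remaining lines could not compensate. So the proposal, as written, does not prove the lemma.
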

\begin{proof} 
It suffices to show that any line can stab at most $N$ squares and that this is the case only if it stabs a single $F^*$--gadget entirely. As there are $6kN$ squares to stab, the claim follows. Without loss of generality, let $l\co y = mx + c$ for some $|m| \leq 1$; the vertical case is symmetric. We call such a line that stabs $N$ squares an $h^*$--line and show in three steps:
\begin{enumerate}
	\item[a.] An $h^*$--line must have a slope $|m| \leq 4k/N$.
	\item[b.] An $h^*$--line cannot intersect squares from two different $F_h^*$--gadgets.
	\item[c.] An $h^*$--line cannot intersect any squares from an $F_v^*$--gadget.
\end{enumerate}
from which it follows that an $h^*$--line must intersect a single entire $F_h^*$--gadget.
\medskip
\noindent
\\
\textbf{a.} If the slope $|m|$ is larger than $4k/N$, i.\,e., $4k/N < |m| \leq 1$, by Proposition \ref{Proposition:MaxStab} the line can stab at most
\begin{eqnarray*}
3k(\rup{N/(4k)} + 1) + 3k(\rup{4k/N} + 1) & \leq & 3k(N/(4k) + 1 + 1 +2) \\
                                          & = & \frac{3}{4}N + 12k\\
                                          & < & N 
\end{eqnarray*}
squares. So any $h^*$--line must have a slope $|m| \leq 4k/N$.
\medskip
\noindent
\\
\textbf{b.} When such a line intersects a square of one $F_h$--gadget at $x = t_0$, it cannot intersect any square of another $F_h^*$--gadget at $x = t_1$ unless $|(t_1 - t_0)(4k/N)| \geq 2\eps$ (the gap between two $y$--disjoint squares, see Observation 1) and thus $|t_1 - t_0| \geq 6k + 3$ (as $n >> k$). In particular, if such a line intersects the $j$--th square (from the right) of one $F_h^*$--gadget, it cannot intersect the $j'$--th square from another $F_h^*$--gadget for $j - (6k + 1) \leq j' \leq j + (6k+1)$.

Let $C$ denote the number of different $F_h^*$--gadgets intersected. Then the total number of squares stabbed is at most $N - (C-1)(6k + 1) + 6k$, which is less than $N$ for $C > 1$. Thus we have $C = 1$, i.\,e. any $h^*$--line can intersect at most one $F_h^*$--gadget and must stab at least $N - 6k$ of its squares. Thus it must have a slope of at most $|m| \leq 1/(N - 6k - 1) < 2/N$.
\medskip
\noindent
\\
\textbf{c.} In order for a line to stab $N - 6k$ squares of a single $F_h^*$--gadget, it must intersect the $(6k + 1)$--th square (from the right) of this gadget. Thus, at $x = -3(k+1) - 6k - 1$, any $h^*$--line must be above $y = -3k$, which is below the lowest point where it can stab any square from an $F_h^*$--gadget. (Observe that the bounds are even stronger, e.g., any such line must even be above $-3k + s/2 + \eps$, but this is not needed here).  Then the line cannot stab any square from an $F_v^*$--gadget, as 
\[-3k - |x_r - (-3(k+1) - 6k - 1)|\cdot 2/N > -3k,\]
and any square from an $F_v^*$--gadget lies below $-3(k + 1)$. So it must lie entirely inside a single $F_h^*$--gadget in order to be an $h^*$--line. 
Analogous calculations prove the same for the case $|m| > 1$ when the line is almost vertical.
\end{proof}
Figure \ref{fig:CoordinatesOverviewNew} indicates the coordinates used.
\begin{figure}[ht]
	\centering
		\includegraphics[scale=.7]{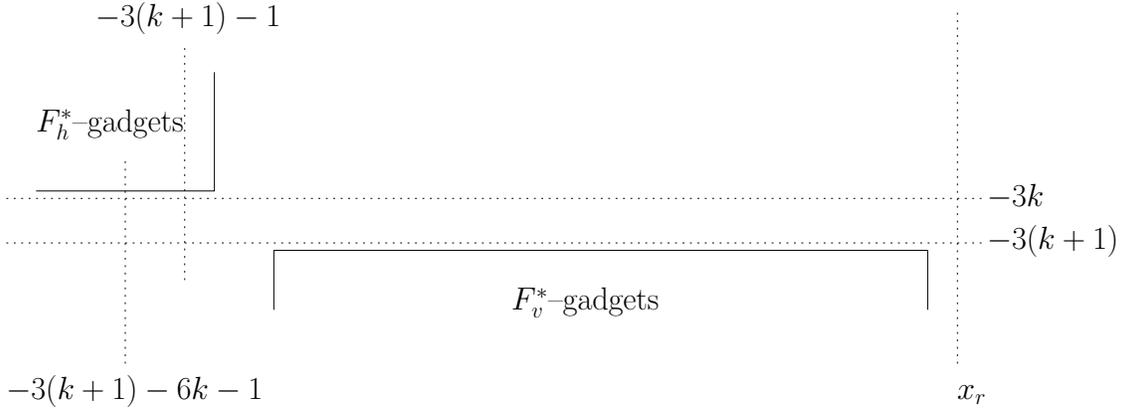}
	\caption{The coordinates}
	\label{fig:CoordinatesOverviewNew}
\end{figure}
Thus, for the $F^*$ gadgets only, we know that in order to stab all the squares with $6k$ lines, one line must intersect exactly one (entire) $F^*$--gadget. In order to do so, by Proposition \ref{Proposition:MaxStab}, it must have a slope of at most $1/(N-1)$ (in the horizontal case) or at least $N-1$ (in the vertical case). The crucial point is  that if we now \emph{add} squares to the existing set, these properties remain.

\paragraph{The Final Construction}
Now we place the remaining squares from $\calS^*(G, k)$. Recall tha, by Lemma \ref{Lemma:ShrinkingLemma}, $\calS^*(G, k)$ can be stabbed by $6k$ \textit{\ap} lines iff $\calS(G, k)$ can be stabbed by $6k$ \ap lines.\\
By shrinking, we have created a small ``fuzzy'' region (see Observation 1) and have thereby achieved that the small change that a line can make after leaving its $F^*$--gadget cannot influence the solution. This is expressed by the next lemma:
\begin{lemma}\label{Lemma:WlogHV} In any solution to $\calS^*(G, k)$ with $6k$ arbitrary lines, without loss of generality the lines can assumed to be axis--parallel, i.\,e., if there is a solution with $6k$ arbitrary lines, then there is also one with $6k$ \ap lines. 
\end{lemma}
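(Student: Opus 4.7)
The plan is to replace each line in an arbitrary-direction solution to $\calS^*(G,k)$ by an axis-parallel line stabbing the same squares, producing the axis-parallel replacement via Helly's theorem for intervals on the real line. By Lemma~\ref{Lemma:SlopeLemma}, every line $\ell$ in a $6k$-line solution entirely intersects exactly one $F^*$-gadget, so (in the horizontal case) $\ell$ has slope $|m|\le 1/(N-1)=O(1/n^{2})$. I would first observe that we may assume $n\ge Ck$ for a suitable constant $C$ (roughly $C=40$ suffices) by padding $G$ with isolated vertices; for smaller $n$ the whole reduction is already of bounded size in $k$. Under this assumption, if $\ell$ stabs the $F_h^*$-gadget $(S_h^{\pm})^{a}$ entirely and we let $y_{0}:=\ell(x^{*})$ where $x^{*}$ is the right edge of that gadget, then $y_{0}$ lies in $\pr_{y}(r)$ for every square $r$ of that $F_h^*$-gadget; moreover $|\ell(x)-y_{0}|<\eps$ for all $x$ in the $x$-range of the non-$F$ gadgets, since that range has extent $O(k)$ and the slope is $O(1/n^{2})$.

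The core step is to show that the $y$-projections of the squares in $\calS^*(G,k)$ stabbed by $\ell$ pairwise intersect; once this is done, Helly's theorem for intervals gives a common point $y^{*}$, and the axis-parallel line $\ell':y=y^{*}$ stabs every square that $\ell$ stabs. Pairwise intersection splits into three cases. (a) Two squares of $\ell$'s $F_h^*$-gadget trivially share the same $y$-projection. (b) For two non-$F$ squares $r,r'$ both stabbed by $\ell$, the $y$-coordinates of $\ell$ at their $x$-ranges differ by less than $2\eps$ by the variation bound; if their $y$-projections were disjoint then Observation~1 would force them to be at distance at least $2\eps$, a contradiction. (c) For one $F_h^*$-square $r$ and one non-$F$ square $r'$ both stabbed by $\ell$, the value $\ell(x_{r'})$ lies in $\pr_{y}(r')$ and is within $\eps$ of $y_{0}\in\pr_{y}(r)$, so $\pr_{y}(r')$ contains a point within $\eps$ of $\pr_{y}(r)$; Observation~1 again precludes disjointness.

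The same argument with $x$ and $y$ interchanged handles the nearly-vertical lines, and since different lines are replaced independently, the resulting $6k$ axis-parallel lines still stab $\calS^*(G,k)$. I expect case~(c) of the pairwise-intersection step to be the delicate part: it is what forces both the specific choice of reference abscissa $x^{*}$ at the right edge of the $F_h^*$-gadget (so that $y_{0}$ automatically lies in $\pr_{y}$ of every $F_h^*$-square) and the assumption $n\gg k$ (which keeps the $y$-variation between $x^{*}$ and the non-$F$ region below $\eps$, even though inside the $F_h^*$-gadget itself $\ell$ may vary in $y$ by nearly a full unit over its width $\Theta(N)$).
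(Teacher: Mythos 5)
Your argument is correct and follows essentially the same route as the paper's proof: both use Lemma~\ref{Lemma:SlopeLemma} to confine each line to a single $F^*$--gadget with slope at most $1/(N-1)$, bound the resulting $y$--variation over the $O(k)$-wide remainder of the construction by $2\eps$, and invoke Observation~1 to conclude that the line stabs no two $y$--disjoint squares and can therefore be replaced by an \ap line. You merely spell out the one-dimensional Helly step and the $n\gg k$ assumption that the paper leaves implicit.
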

\begin{proof} 
Let $l$ be an almost horizontal line with slope $|m| < 1/(N-1)$. As the line has to intersect an entire $F_h^*$--gadget, it suffices to calculate the change it can make between the minimum $x$--position where it can leave an $F_h^*$--gadget, namely $-3(k + 1) - 1 + s/2 + \eps$ (Figure \ref{fig:CoordinatesOverviewNew}), and $x_r$, which is
\[ |x_r - \left(-3(k + 1) - 1 + s/2 + \eps\right) |\cdot |m| < 10k \cdot |1/(N-1)| < 2\eps. \] 
Thus, it cannot intersect any two $y$--disjoint squares, from which it follows that it can be replaced by a horizontal line. Again, similar calculations prove the vertical case.
\end{proof}
That means if there is a solution with arbitrary lines for the set $\calS^*(G, k)$, then there is also one where all the lines are axis--parallel.
Using Lemma \ref{Lemma:ShrinkingLemma}, it follows that $\calS(G, k)$ can be stabbed by $6k$ \ap lines if and only if $\calS^*(G, k)$ can be stabbed by $6k$ arbitrary lines, which proves the following: 
\begin{theorem}\label{thm:wonehardunitsquaresarbitrary} 
Stabbing a set of \ap unit squares in the plane with $k$ lines of arbitrary directions is \wone--hard with respect to $k$.
\end{theorem}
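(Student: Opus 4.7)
The plan is to present this theorem as a direct consequence of the machinery developed in the preceding pages, packaging it as an fpt-reduction from the problem of Theorem~\ref{thm:wonehardunitsquaresap} to itself with the line-direction restriction dropped. Given an instance $(G,k)$ of directed $k$-\clique, the reduction outputs the set $\calS^*(G,k)$ built in this subsection, together with the parameter $k' := 6k$. Since $|\calS^*(G,k)| = O(n^2 k^2)$ and the construction is polynomial-time, and since $k'$ depends only on $k$, this is an fpt-reduction; it only remains to argue correctness.

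Correctness follows by chaining four equivalences. First, by Lemma~\ref{lem:mainLemma} (the content of Theorem~\ref{thm:wonehardunitsquaresap}), $\calS(G,k)$ admits a stabbing by $6k$ \ap lines if and only if $G$ has a $k$-clique. Second, by Lemma~\ref{Lemma:ShrinkingLemma}, $\calS^*(G,k)$ admits a stabbing by $6k$ \ap lines if and only if $\calS(G,k)$ does. Third, any \ap stabbing is also an arbitrary-direction stabbing, giving the easy forward implication. Fourth, and this is the substantive step, Lemma~\ref{Lemma:WlogHV} shows that any stabbing of $\calS^*(G,k)$ by $6k$ arbitrary lines can be replaced by a stabbing by $6k$ \ap lines. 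Concatenating these, $\calS^*(G,k)$ is stabbable by $6k$ arbitrary lines iff $G$ has a $k$-clique, which is what we need.

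The main obstacle has, in fact, already been dispatched by Lemmas~\ref{Lemma:SlopeLemma} and~\ref{Lemma:WlogHV}: these required engineering the $F^*$-gadgets with $N = n^2$ squares so that stabbing one entirely forces a line of slope essentially $1/N$, and calibrating the shrinking parameter $\eps = s/6$ so that the residual ``wiggle room'' of such a nearly-horizontal line over the whole construction (of $x$-extent $O(k)$) is strictly smaller than the $2\eps$ gap separating any two squares that cannot be simultaneously stabbed by an \ap line. Once these inequalities are in place, a nearly-\ap line can stab exactly the same squares as its axis-parallel replacement, and the reduction goes through. Hence, by the \wone--hardness of $k$-\clique, stabbing \ap unit squares by $k$ lines of arbitrary direction is \wone--hard with respect to $k$, completing the proof.
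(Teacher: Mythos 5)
Your proposal is correct and follows essentially the same route as the paper: the reduction outputs $\calS^*(G,k)$ with parameter $6k$, and correctness is obtained by chaining Lemma~\ref{lem:mainLemma}, Lemma~\ref{Lemma:ShrinkingLemma}, the trivial direction, and Lemma~\ref{Lemma:WlogHV}, with the substantive work residing in the enlarged $F^*$-gadgets ($N=n^2$) and the $\eps = s/6$ shrinking that creates the $2\eps$ separation exploited in Lemmas~\ref{Lemma:SlopeLemma} and~\ref{Lemma:WlogHV}. This matches the paper's argument exactly, including the identification of which steps are trivial and which carry the technical load.
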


\subsubsection{Sets of disjoint objects}

In this section we show that some of the problems are even hard for sets of disjoint objects. First, we show that stabbing disjoint rectangles with \ap lines is \wone--hard if the rectangles can be chosen arbitrarily. This goes by a small modification of the sets in the previous sections. It is important to notice that for this problem, the rectangle chosen for the reduction, i.e., the ratio of its side lengths, depends on $n$, in contrast to the results in the previous section, where (after scaling the construction) only a single base object was required. 

From this we derive, as a main result, that stabbing \emph{disjoint} \ap unit squares with lines in \emph{arbitrary} directions is also \wone--hard, in contrast to the case where the lines have to be axis--parallel, which is covered in the next chapter.

The proof will consists of three steps which we will sketch here first:
\begin{enumerate}
	\item ``Wobble'' the squares in $\calS^*(G, k)$ a little, such that all the (parallel) diagonals of the squares are disjoint.
	\item Replace each diagonal with a \emph{very thin} rectangle, such that all the resulting rectangles are disjoint.
	\item Transform the set of rectangles to a set of unit squares via a bijective linear transformation.
\end{enumerate}

\subsubsection{Disjoint Rectangles}
Starting with the set $\calS^*(G, k)$ from the previous section, we will construct a set of disjoint rectangles $\mathcal R^*(G, k)$ that can be stabbed by $6k$ lines if and only if the $\calS^*(G, k)$ can. This will prove the hardness for both the cases where the lines chosen have to be \ap as well as for arbitrary lines.

Recall that the squares in $\calS^*(G, k)$ have a side length of $u^* = 1 - 1/n - 2\eps$ for the $\eps$ defined as $s/6$. By Lemma \ref{Lemma:WlogHV}, the set $\calS^*(G, k)$ can be stabbed by $6k$ arbitrary lines if and only if it can be stabbed by $6k$ \ap lines, and by Lemma \ref{Lemma:DeltaRobust}, the set $\calS^*(G, k)$ is $(s/12)$--robust, as $\calS(G, k)$ is $(s/4)$--robust and $s/4 - s/6 = s/12$.

We will modify the set $\calS^*(G, k)$ such that no two (parallel) diagonals intersect any more while maintaining the significant combinatorial properties. Recall that right now for an $A^*$--, $D^*$--, and $C^*$--gadgets, the diagonals of some of the squares may intersect, as indicated in Figure \ref{fig:Wobbling}.
\begin{figure}[ht]
	\centering
		\includegraphics[width=1.00\textwidth]{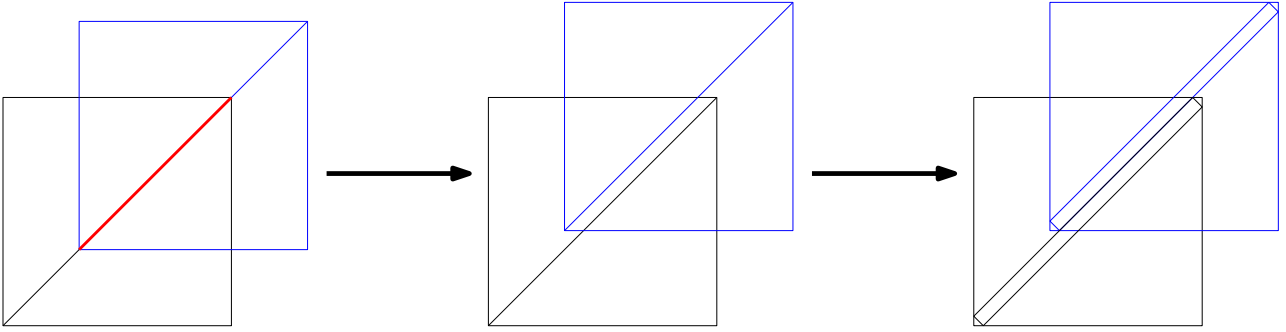}
	\caption{Wobble and replace}
	\label{fig:Wobbling}
\end{figure}

Let $W := n^{-4}$ and $\varphi(i, j) := i \cdot n + j$. The new squares will have a side length of $u_w := u^* - 2Wn^2$. We define the wobble--function $\omega$, which shrinks and translates the squares, as follows:
\[ \omega_{i, j}\left(\square_{u^*}(x, y)\right) = \square_{u_w}(x + Wn^2, y + Wn^2 + \varphi(i,j)\cdot 2W). \]
We now take the set $\calS^*(G, k)$ and wobble the squares inside the $A^*$--, $D^*$, and $C^*$--gadgets. For the $A^*$-- and $D^*$--gadgets, we apply $\omega_{i, j}$ to the square that is added for $(i, j) \notin E$ (which is $\square_{u^*}(i/n + \eps, j/n + \eps)$, relative to the gadget's offset).

Each $C^*$--gadget contains $2n - 2$ squares. For each such gadget, we apply $\omega_{i\text{ div }n, i \text{ mod } n}$ to the $i$--th square.
The other squares, i.\,e., those contained in the $F^*$--gadgets, are simply shrinked (but not shifted) to be all of size $u_w \times u_w$. This yields a set of \ap unit squares $\calW^*(G, k)$.

Now we want replace the diagonals of the squares in $\calW^*(G, k)$ by very thin rectangles, which will be all disjoint. We define the rectangle $\rho_W$ by its endpoints
\[ \rho_W(x, y) := \{ (x + W, y), (x, y + W), (x + u_w - W, y + u_w), (x + u_w, y + u_w - W) \}\]
as shown in Figure \ref{fig:Rectangle}.
\begin{figure}[ht]
	\centering
		\includegraphics[scale=.5]{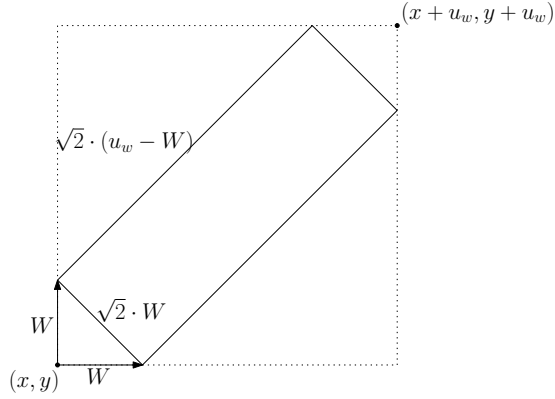}
	\caption{The rectangle $\rho_W(x, y)$}
	\label{fig:Rectangle}
\end{figure}
Instead of each square in $\calW^*(G, k)$ we now place a rectangle $\rho_W$ whose bounding box is this square.

Thereby we have achieved that all the rectangles created (which are all copies of $\rho_W$) are disjoint, as the distance of two diagonals is now at least $\sqrt 2 \cdot W = 2 \cdot \frac{1}{2} \underbrace{\sqrt 2 \cdot W}_{\text{width of } \rho_W}$. Thus, the resulting set $\calR^*(G, k)$ is a set of disjoint translates of $\rho_W$.

Now we can show the main lemma of this section, which completes the proof of Thm. \ref{thm:woneharddisjointrectanglesap}.
\begin{lemma}\label{lemma:disjointRectanglesAP} $\calS^*(G, k)$ can be stabbed by $6k$ lines if and only if $\calR^*(G, k)$ can be stabbed by $6k$ lines.
\end{lemma}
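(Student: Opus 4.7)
The plan is to prove both directions via the intermediate set $\calW^*(G, k)$ of wobbled squares, using two elementary observations: (i) each rectangle $\rho_W(x, y) \in \calR^*$ is contained in its bounding-box square, which is precisely the corresponding square in $\calW^*$; and (ii) because the diagonal contained in $\rho_W$ has full $x$- and $y$-projection equal to that of its bounding box, an axis-parallel line stabs $\rho_W$ if and only if it stabs this bounding box.

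For the \emph{forward direction}, I would first apply Lemma~\ref{Lemma:WlogHV} to assume that any $6k$-line stabbing of $\calS^*(G, k)$ is axis-parallel. The set $\calS^*$ is $(s/12)$-robust (since $\calS(G, k)$ is $(s/4)$-robust and the shrinking by $\eps = s/6$ preserves robustness up to reduction by $s/6$, via the second part of Lemma~\ref{Lemma:DeltaRobust}). The wobble operation $\omega_{i,j}$ shrinks each square by $\sigma = Wn^2 = n^{-2}$ and translates it by at most $\tau \le 2Wn^2$, so $\sigma + \tau < s/12$ for $n$ sufficiently large. A second application of Lemma~\ref{Lemma:DeltaRobust} then yields $6k$ axis-parallel lines stabbing $\calW^*$; by observation (ii), the same lines stab $\calR^*$.

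For the \emph{reverse direction}, I would start from the fact that, by observation (i), any $6k$ lines stabbing $\calR^*$ also stab $\calW^*$. Next, I would re-run the proofs of Lemma~\ref{Lemma:SlopeLemma} and Lemma~\ref{Lemma:WlogHV} on $\calW^*$: the $F^*$-gadget layout is unchanged and only the side length decreases to $u_w$, so an essentially identical counting argument shows that the stabbing lines may be taken axis-parallel. The final step transfers an axis-parallel stabbing of $\calW^*$ to one of $\calS^*$: a horizontal line $y = c$ stabbing the wobbled square $\omega_{i,j}(\square_{u^*}(x, y))$ yields the adjusted height $c - \varphi(i, j) \cdot 2W$ lying in the $y$-projection of the corresponding original square. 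Since the values $\varphi(i, j) \cdot 2W$ across a commonly pierced subset differ by at most $2n^{-2}$, while the $(s/12)$-robustness of $\calS^*$ provides a common $y$-projection of length at least $s/6 \gg 2n^{-2}$, a single height $c'$ near $c$ pierces all corresponding original squares simultaneously; the vertical case is symmetric.

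The hard part will be this final step of the reverse direction: unlike Lemma~\ref{Lemma:DeltaRobust}, which transfers stabbability only from the unperturbed set to the perturbed one, here the opposite transfer is needed, and it is complicated by the fact that the wobble translations $\varphi(i, j) \cdot 2W$ are not uniform across squares. The argument hinges on the choice $W = n^{-4}$, which makes the total wobble amount $O(n^{-2})$ asymptotically negligible compared with the robustness margin $\Theta(n^{-1})$ of $\calS^*$, so that a common axis-parallel piercer of $\calS^*$ can always be recovered from one of $\calW^*$ by a small adjustment.
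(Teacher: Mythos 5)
Your forward direction is essentially the paper's argument: reduce to axis-parallel lines via Lemma~\ref{Lemma:WlogHV}, absorb the wobble into the $(s/12)$-robustness of $\calS^*(G,k)$ via Lemma~\ref{Lemma:DeltaRobust}, and observe that an axis-parallel line meets $\rho_W$ iff it meets its bounding box. That matches the chain (i)$\follows$(ii)$\follows$(iii) in the paper's proof.

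Your reverse direction, however, misses the one idea that makes it trivial, and the elaborate substitute you propose is exactly the wrong place to spend effort. The construction is designed so that the containments are \emph{nested}: each rectangle of $\calR^*(G,k)$ sits inside its wobbled bounding-box square in $\calW^*(G,k)$ (your observation (i)), and each wobbled square sits inside the corresponding \emph{original} square of $\calS^*(G,k)$, because the wobble shrinks by $Wn^2$ from each side while translating by only $O(Wn^2)$. Hence any $6k$ lines --- of arbitrary directions, no axis-parallel reduction needed --- that stab $\calR^*(G,k)$ automatically stab $\calS^*(G,k)$: the same lines work verbatim. This is the paper's implication (iv)$\follows$(i), and it is one sentence. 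By contrast, your plan requires (a) re-proving Lemmas~\ref{Lemma:SlopeLemma} and~\ref{Lemma:WlogHV} for $\calW^*(G,k)$, which is not free --- the counting arguments there rest on the $2\eps$ separation of Observation~1 for $\calS^*(G,k)$, and you would have to re-verify the gap conditions after the non-uniform wobble --- and (b) a ``reverse robustness transfer'' from $\calW^*$ to $\calS^*$ that you yourself flag as the hard part. That hard part is a non-problem: a line stabbing a wobbled square stabs the original square containing it, with no height adjustment at all. The lesson is that Lemma~\ref{Lemma:DeltaRobust} is only needed in the direction where the objects get \emph{smaller} and you must show a solution survives; in the direction where the objects get \emph{larger} (back to $\calS^*$), containment does all the work. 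I recommend restructuring the proof as the paper's four-way cyclic equivalence (i)$\follows$(ii)$\follows$(iii)$\follows$(iv)$\follows$(i), which also cleanly covers both the axis-parallel and the arbitrary-direction readings of ``$6k$ lines'' in the statement.
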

\begin{proof} We prove that the following are equivalent
\begin{enumerate}[(i)]
	\item $\calS^*(G, k)$ can be stabbed by $6k$ arbitrary lines.
	\item $\calS^*(G, k)$ can be stabbed by $6k$ \ap lines.
	\item $\calR^*(G, k)$ can be stabbed by $6k$ \ap lines.
	\item $\calR^*(G, k)$ can be stabbed by $6k$ arbitrary lines.
\end{enumerate}
(i) $\follows$ (ii): By Lemma \ref{Lemma:WlogHV}.\\
(ii) $\follows$ (iii): Obviously, an \ap line intersects a square iff and only if it intersects its inscribed rectangle $\rho_W$. As the set $\calS^*(G, k)$ is $s/12$--robust and
\[ \underbrace{2Wn^2}_{\text{max. shift}} + \underbrace{2Wn^2}_{\text{shrink}} = 4 n^{-2} < 1/\left(12n\right) = s/12, \] 
we can apply Lemma \ref{Lemma:DeltaRobust}.\\
(iii) $\follows$ (iv): trivial\\
(iv) $\follows$ (i): All the wobbled squares are contained in the original squares, as the maximum shift is $Wn^2$ and they are shrinked by $Wn^2$ from each side. Thus, any solution to the set of inscribed rectangles $\calR^*(G, k)$ is also a solution to $\calS^*(G, k)$.
\end{proof}

\subsubsection{Disjoint Unit Squares}
To prove the case of disjoint unit squares now is an easy task. The matrix 
\begin{eqnarray*}
M & = & \frac{1}{\sqrt 2}
\left(
\begin{array}{cc}
1/W & 0\\
0 & 1/\left(u_w - W\right)
\end{array} 
\right)
\cdot\frac{1}{\sqrt 2}
\left(
\begin{array}{cr}
1 & -1\\
1 & 1
\end{array}
\right)\\
& = & \frac{1}{2}
\left(
\begin{array}{cc}
1/W & -1/W\\
1/(u_w - W) & 1/(u_w - W)
\end{array} 
\right)
\end{eqnarray*}
represents a bijective linear transformation and the image of $\rho_W$ under $M$ is an \ap unit square. Thus, the set $\calU^*(G, k) := M\cdot \calR^*(G, k)$ consists of disjoint unit squares and is combinatorially equivalent to $\calR^*(G, k)$. This leads to the proof of Thm. \ref{thm:woneharddisjointunitsquares}. Also, observe that because of Lemma \ref{lemma:disjointRectanglesAP}, $\calU^*(G, k)$ can be stabbed by $6k$ lines in direction either $M\cdot e_1$ or $M \cdot e_2$, where $e_i$ denotes the canonical base vector, if and only if it can be stabbed by $6k$ arbitrary lines. This will be used for the proof of Thm. \ref{thm:woneharddisjoint} in the next section.

\subsubsection{Other objects}

Using the results from the previous sections, we now come prove the \wone--hardness for a wide range of stabbing problems. The objects we will consider are those which, from two directions, ``look like a square''. This can be formalized as follows:
\begin{definition} Let $d, d'$ be two linearly independent vectors. An object $o$ is said to be a \textit{quasi--square} with respect to $d$ and $d'$, if the projection of $o$ on each of the orthogonal complements of $d$ and $d'$ is an open line segment, i.\,e., is homeomorphic to $(0, 1)$.
\end{definition}
For an object $o$, we define the \ap \emph{bounding box $\BB(o)$} as
\[ \BB(o) := \pr_{x}(o) \times \pr_{y}(o). \]
Obviously, if $\pr_x(o)$ and $\pr_y(o)$ are connected, an \ap line intersects the bounding box of an object if and only if it intersects the object itself. 

If we are given a quasi--square with respect to $d = (d_x, d_y)$ and $d' = (d_x', d_y')$, we can transform it via the bijective linear transformation
\[ A = \lambda
\left( 
\left(
\begin{array}[h]{cc}
-d_x & d_x'\\
d_y & -d_y'
\end{array}
\right) \cdot
\left(
\begin{array}[h]{cc}
	\frac{l_d}{\| d \|} & 0\\
	0 & \frac{l_{d'}}{\| d' \|}
\end{array}
\right)
\right)^{-1}
\]
to yield an objects that is combinatorially equivalent to a unit square when only \ap lines are considered (here, $l_{d}, l_{d'}$ denote the lengths of the projections to the orthogonal complements of $d$ and $d'$, respectively).
The bounding box of $A\cdot o$ then is a square with side length $\lambda$. Also, the image of each line parallel to $d$ or $d'$ is axis--parallel As the transformation is bijective, we have 
\begin{proposition}\label{Proposition:BoundingBox} If $o$ is a quasi--square with respect to $d, d'$, for any  $\{d, d'\}$--line $l$ it holds that
\[ l\text{ intersects } o \iff A\cdot l\text{ intersects } A\cdot o \iff A\cdot l \text{ intersects } BB(A \cdot o). \] 
\end{proposition}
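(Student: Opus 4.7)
The plan is to establish the two equivalences separately, using bijectivity of $A$ for the first and the explicit construction of $A$ together with the quasi-square assumption for the second. For the first equivalence, since $A$ is a bijective linear transformation, for any subsets $X, Y \subseteq \R^2$ one has $X \cap Y \neq \emptyset$ iff $(A \cdot X) \cap (A \cdot Y) \neq \emptyset$; taking $X = l$ and $Y = o$ immediately gives that $l$ intersects $o$ iff $A \cdot l$ intersects $A \cdot o$.

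For the second equivalence I would first record the two structural facts about $A$ that the construction is engineered to produce: (a) $A$ maps direction $d$ to one coordinate direction and direction $d'$ to the other, so that every $\{d, d'\}$-line $l$ becomes an axis-parallel line $A \cdot l$; and (b) the scaling factors $l_d/\|d\|$ and $l_{d'}/\|d'\|$, together with $\lambda$, are chosen so that the bounding box of $A \cdot o$ is a square of side length $\lambda$. Both facts are short calculations from the matrix formula defining $A$, which I would carry out but not grind through here.

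Next I would verify that $\pr_x(A \cdot o)$ and $\pr_y(A \cdot o)$ are both connected. Since $o$ is a quasi-square with respect to $d$ and $d'$, its projections onto the orthogonal complements of $d$ and of $d'$ are homeomorphic to $(0,1)$, hence connected. Under $A$, these orthogonal complements are mapped onto the two coordinate axes (since $d, d'$ are sent to axis directions), so $\pr_x(A \cdot o)$ and $\pr_y(A \cdot o)$ are the images of connected open segments under a bijective linear map, and are therefore themselves connected. With this in hand, the observation recorded just before the proposition, namely that an axis-parallel line intersects an object with connected $x$- and $y$-projections iff it intersects its bounding box, applies directly to the axis-parallel line $A \cdot l$ and the object $A \cdot o$, yielding the equivalence $A \cdot l$ intersects $A \cdot o$ iff $A \cdot l$ intersects $\BB(A \cdot o)$.

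The main obstacle is the bookkeeping in fact (a): checking carefully that $A$ as defined really sends $\{d, d'\}$-lines to axis-parallel lines (getting the signs, scalings, and the matrix inversion right), and that under this map the orthogonal complement of $d$ lands on the coordinate axis different from the image of $d$ itself, so that the projection statement above goes through. Once this linear-algebraic verification is done, the three expressions in the proposition follow cleanly from bijectivity of $A$ and the bounding-box observation.
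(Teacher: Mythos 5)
Your proposal is correct and follows essentially the same route as the paper, which likewise derives the first equivalence from bijectivity of $A$ and the second from the facts that $A$ maps $\{d,d'\}$-lines to axis-parallel lines and that the quasi-square condition makes the coordinate projections of $A\cdot o$ connected segments, so the bounding-box observation applies. The paper states these supporting facts in the surrounding prose and gives no further proof, so your write-up is if anything more explicit than the original.
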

Thus, each instance with translates of $o$ and directions $\{d, d'\}$ is combinatorially equivalent to an instance with unit squares and \ap lines, and vice versa.

For connected objects that are not a point, also the constructions for the disjoint cases can easily be adapted: Thereto, we simply scale and rotate $o$ via a bijective linear transformation to fit inside $\rho$, the rectangle described in the previous section, such that it combinatorially is ``almost'' the same as $\rho$. Then placing such transforms of $o$ instead of $\rho$ in the set $\calR^*(G, k)$ and applying the inverse transformation again gives a set of disjoint translates of $o$ that can be stabbed by $6k$ arbitrary lines if and only if $\calR^*(G, k)$ can be stabbed by $6k$ arbitrary lines. We omit the technical details. See Figure \ref{fig:DisjointObjects}.
\begin{figure}[htbp]
	\centering
		\includegraphics[scale=.6]{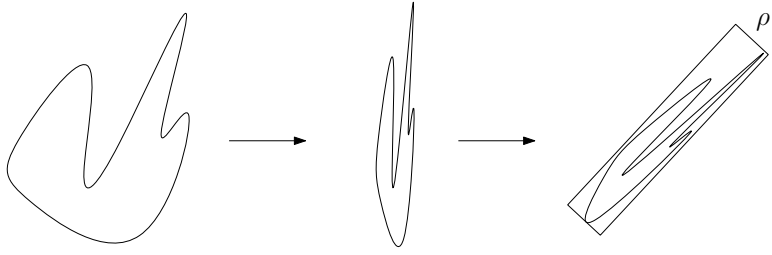}
		\caption{Transfomation of $o$}
	\label{fig:DisjointObjects}
\end{figure}
Using the remark at the end of the previous section, this proves Thm. \ref{thm:woneharddisjoint} (i) and (ii).

\subsection{Fixed Parameter Tractable Cases}\label{section:FPT}
In this section, we will consider several restricted versions of the above problems that are fixed parameter tractable. Here, all the objects are assumed to be closed, but again it is easy to modify the proofs to handle open objects as well.

\subsubsection{Stabbing disjoint \ap unit squares with \ap lines in the
  plane}\label{sssection:FPTDisjointUnitSquares}

To illustrate the idea, we first analyze the simplest case where the objects to be stabbed are disjoint \ap unit squares and the lines have to be axis--parallel.

Let $\calS$ be such a set of unit squares. Clearly it suffices to consider only lines that support the boundary of a square in $\calS$, so the total number of these \textit{relevant} lines is $2n + 2n$. Let $I(l)$ denote the set of squares in $\calS$ that are stabbed by $l$. A line $l$ is said to \emph{dominate} another line $l'$, if $I(l) \supseteq I(l')$.

The following data reduction rule is required for our algorithm to work:

\textbf{DR:} 
\emph{For all $\kappa > k+1$ squares with the same $x$--coordinates, delete
  all but $k+1$ of them, and the same for $\kappa > k+1$ squares that have
  the same $y$--coordinates.}

This rule is correct, i.\,e., the new set can be stabbed by $k$ lines if and only if the old one can: If there is a solution of size $k$ for the reduced set, then a solution of size $k$ for this set must contain a line that intersects all of those squares, for otherwise we would need at least $k+1$ lines. But any such line stabs all the deleted squares, too.\\

A set on which this data reduction rule is applied will be called a \textit{DR--set}. The following lemma states the main idea behind the algorithm:

\begin{lemma}\label{Lemma:BoundingLemma} 
  Let $l$ be a horizontal line that intersects $\kappa > k$ unit squares
  $I(l) = \{ \square_1(x_i, y_i) \mid 1 \leq i \leq \kappa \} \subseteq \mathcal
  S$. Then in order to stab the set $\mathcal S$ with $k$ lines, there
  has to be a horizontal line $l^*$ that intersects at least two squares
  from $I(l)$. Further, $l^*$ can be chosen from the set
  \[ B(I(l)) := \{ a_i \mid a_i\co y = y_i, 1 \leq i \leq \kappa \}
  \cup \{ b_i \mid b_i\co y = y_i + 1, 1 \leq i \leq \kappa \}. \]
\end{lemma}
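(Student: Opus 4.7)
The plan is to split the lemma into two claims, proved by a pigeonhole argument followed by a line-shifting argument.

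For the first claim---that any $k$-line solution contains a horizontal line $l^*$ stabbing at least two squares of $I(l)$---I would start with the geometric observation that the squares in $I(l)$, being pairwise disjoint closed unit squares whose vertical ranges all contain the $y$-coordinate $Y$ of $l$, must have pairwise disjoint horizontal ranges. Hence any vertical line stabs at most one square of $I(l)$. In a solution consisting of $h$ horizontal and $v$ vertical lines with $h+v=k$, the vertical lines account for at most $v$ of the $\kappa$ squares in $I(l)$, so horizontal lines must collectively cover at least $\kappa - v > k - v = h$ of them (counted with multiplicity). By pigeonhole, some horizontal line $l^*\co y = c$ in the solution stabs at least two squares of $I(l)$.

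For the second claim---that $l^*$ may be chosen from $B(I(l))$---my plan is to replace $l^*$ with a parallel line in $B(I(l))$ that still stabs every square stabbed by $l^*$. I would split on whether $c \geq Y$ or $c \leq Y$. In the first case, I study the upper bound $\beta := \min_{i \in I(l^*)} (y_i + 1)$: any square $i \in I(l^*) \setminus I(l)$ must satisfy $y_i > Y$ (the alternative $y_i + 1 < Y$ would force $c \leq y_i + 1 < Y \leq c$, impossible), hence $y_i + 1 > Y + 1$, while every $i$ in the non-empty set $S' := I(l^*) \cap I(l)$ satisfies $y_i + 1 \leq Y + 1$. So $\beta$ is attained at some $j^* \in S' \subseteq I(l)$, giving $\beta = y_{j^*} + 1 \in B(I(l))$. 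Shifting $l^*$ up to $y = \beta$ still stabs each $i \in I(l^*)$, since $y_i \leq c \leq \beta \leq y_i + 1$. The case $c \leq Y$ is symmetric, using $\alpha := \max_{i \in I(l^*)} y_i$, which again turns out to be attained in $S'$.

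The only real obstacle is the second claim: naively one might shift $l^*$ to a line determined by any square it stabs, but since $l^*$ may also stab squares outside $I(l)$, one must verify that the binding constraint in the direction of shifting comes from a square of $S'$. The case split above is precisely what exploits the fact that once $l^*$ is known to lie on a specific side of $l$, any further shift away from $l$ can only be blocked by squares on the far side of $l^*$, which must then belong to $I(l)$.
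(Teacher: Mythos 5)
Your proof is correct and takes essentially the same route as the paper: a pigeonhole argument (using that disjoint unit squares all meeting the horizontal line $l$ have pairwise disjoint $x$-ranges, so no vertical line covers two of them) for the existence of $l^*$, followed by a domination argument that pushes $l^*$ away from $l$ until it is anchored at an edge of a square of $I(l)$, exploiting the fact that every square of $I(l^*)\setminus I(l)$ lies strictly on the far side of $l$ so the binding constraint comes from $I(l)$. If anything, your formulation via $\beta=\min_{i\in I(l^*)}(y_i+1)$ and $\alpha=\max_{i\in I(l^*)}y_i$ is the more carefully argued of the two: it makes explicit which endpoint of the feasible interval binds, where the paper's text nominally snaps $l'$ to $a_j$ in the case where (as your computation shows) the correct dominating line is the top edge $b_j$.
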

\begin{proof}
There must be a line that intersects at least two of the squares because of the pigeonhole principle. This line cannot be vertical, as all of the squares are disjoint, i.\,e., no two of them can lie on both a common vertical and horizontal line.

We show that any such line is dominated by a line in $B(I(l))$. Let $I(l) = \{s_1, \dots, s_\kappa \}$, ordered from top to bottom, and let $l'$ be any line that intersects exactly the squares $s_i,\dots, s_j$ from $I(l)$ (and possibly others that are not in $I(l)$). Observe that always either $s_i = s_1$ or $s_j = s_\kappa$, as all squares have unit size. If both $s_i = s_1$ and $s_j = s_\kappa$, then $l'$ stabs all the squares at once and is thus dominated by either $a_1$ or $b_\kappa$. If $j < \kappa$ (the other case is symmetric) then no square that lies strictly above $l$, i.\,e. is not in $I(l)$ but intersected by $l'$, can have its upper side between $a_j$ and $l$, as $\dist(a_j, l) \leq 1$. Thus we have $I(l') \subseteq I(a_j)$. See Figure \ref{fig:BoundingLemma}.
\begin{figure}
	\centering
		\includegraphics[scale=.5]{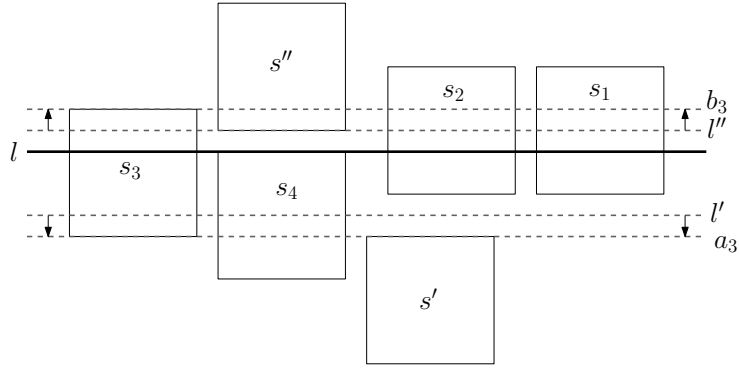}
	\caption{Here we have $I(l') \subseteq I(a_3)$ and $I(l'') \subseteq I(b_3)$}
	\label{fig:BoundingLemma}
\end{figure}
\end{proof}
For reasons of symmetry, an analogous lemma holds for the vertical lines as well. To prove that the algorithm is correct, we need another

\begin{lemma}\label{Lemma:IntermediateLemma} Let $\mathcal S$ be a DR--set. If there is an axis-parallel line $l$ with $|I(l)| > 2k + 1$, then there is also a line $l^*$ parallel to $l$ with $k+1 \leq |I(l^*)| \leq 2k+1$.
\end{lemma}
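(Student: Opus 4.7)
The plan is to convert the claim into a discrete intermediate value statement by sliding a horizontal line and tracking how $|I|$ changes, then using the DR rule to bound how fast $|I|$ can change. Assume without loss of generality that $l$ is horizontal; the vertical case is symmetric. Parameterize horizontal lines by height $y$, let $l_y$ denote the horizontal line at height $y$, and set $f(y) := |I(l_y)|$. Since the squares are closed unit squares $\square_1(x_i, y_i)$, we have $\square_1(x_i, y_i) \in I(l_y)$ iff $y_i \leq y \leq y_i + 1$. Thus $f$ is a nonnegative integer-valued step function satisfying $f(y(l)) > 2k+1$ and $f(y) = 0$ once $y$ exceeds every $y_i + 1$.

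The key step is to show that at any height $y^{*}$, the jump $|f(y^{*+}) - f(y^{*-})|$ is at most $k+1$. At such a jump, a square $\square_1(x_i,y_i)$ enters $I$ iff $y_i = y^{*}$ and leaves $I$ iff $y_i + 1 = y^{*}$, i.e., $y_i = y^{*} - 1$. In either case the squares in question share a common lower-left $y$-coordinate, so by the DR rule applied to $y$-coordinates, at most $k+1$ squares enter and at most $k+1$ squares leave at $y^{*}$. Writing the net change as $E - L$ with $E, L \in \{0, 1, \dots, k+1\}$ then gives $|E - L| \leq k+1$.

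Now slide the line upward from $y(l)$: the sequence of distinct values $f_0, f_1, \dots, f_m$ that $f$ attains is a finite integer sequence with $f_0 > 2k+1$, $f_m = 0$, and $|f_{j+1} - f_j| \leq k+1$ for each $j$. If no $f_j$ were in $[k+1, 2k+1]$, then each $f_j$ would be either $\geq 2k+2$ or $\leq k$, and since the sequence drops from $\geq 2k+2$ to $0$, at some index it would jump from a value $\geq 2k+2$ directly to a value $\leq k$ in one step; this is a jump of size $\geq k+2$, contradicting the bound above. Hence some $f_j \in [k+1, 2k+1]$, and the horizontal line realizing that value is the required $l^{*}$.

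The only real obstacle is identifying the correct quantity to control, namely the per-event jump of $f$, and verifying that the DR rule, which is phrased purely in terms of shared $x$- or $y$-coordinates, delivers exactly the bound $k+1$ needed on both the entering and leaving counts; once that is in place, the discrete intermediate value argument finishes the proof in one line.
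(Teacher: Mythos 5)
Your proof is correct and follows essentially the same route as the paper: both arguments bound the change of $|I(\cdot)|$ between consecutive line positions by $k+1$ using the DR rule (the paper phrases this in terms of neighbouring \emph{relevant} lines, you in terms of jump points of the step function $f$), and both conclude by a discrete intermediate-value argument using that the count eventually drops to at most $k+1$ (the paper) or to $0$ (your version).
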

\begin{proof} Let $l$ be horizontal. Since $\mathcal S$ is a DR--set, the
  first relevant line above $l$ intersects at least $|I(l)| - (k + 1)$
  squares. In general, for two neighbouring relevant lines $l, l'$ we have that $\left||I(l)| - |I(l')|\right| \leq k + 1$. Further, the topmost relevant line stabs at most $k + 1$ squares,
  thus there must be a line $l^ *$ in between with $k+1 \leq |I(l^*)| \leq
  2k+1$. 
\end{proof}

We now come to describe the algorithm STAB($S$, $k$). In each call, it will find a line that stabs many ($k + 1$) but not too many ($2k + 2$) squares, if such a line exists, and otherwise use brute force.

\begin{algorithm}
\caption{STAB($S$, $k$)}
\label{alg:fpt1}
\begin{algorithmic}
\IF {$S = \emptyset$}	
      \STATE {``ACCEPT''}
\ELSIF {$k=0$} \RETURN 
\ENDIF
\STATE{apply DR}
\IF {there exists a line $l$ with $k + 1 \leq |I(l)| \leq 2k+1$}
  \FORALL {lines $l'$ from the set $B(I(l))$}
    \STATE {STAB($S - I(l')$, $k - 1$)}
  \ENDFOR
\ELSE
   \STATE {SOLVE($S, k$)} 
\ENDIF 
\end{algorithmic}
\end{algorithm}  


The SOLVE function simply counts if there are more than $k^2$ squares left and rejects in this case. Otherwise, it uses brute force by trying all $k$--subsets of the at most $4k^2$ relevant lines.
\begin{lemma} The algorithm accepts if and only if the set can be stabbed by $6k$ \ap lines.
\end{lemma}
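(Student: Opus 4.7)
The plan is to prove both directions of the biconditional by induction on the parameter $k$ (equivalently, on the recursion depth of STAB), with three ingredients: safety of DR (already established in the surrounding text), exhaustiveness of the branching rule via Lemma~\ref{Lemma:BoundingLemma}, and correctness of the brute-force SOLVE step via Lemma~\ref{Lemma:IntermediateLemma}. Throughout, I will speak of ``a stabber of the allowed size'' (at most $6k$ as in the statement), and account for the fact that each recursive branch spends one line of budget while reducing $k$ by one.

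For the soundness direction (algorithm accepts $\Rightarrow$ $\calS$ is stabbable by $6k$ \ap lines), I extract an explicit stabbing set along the accepting execution. An accepting leaf is reached either because $\calS=\emptyset$ (no lines needed) or because SOLVE exhibited an actual $k$-subset of relevant lines covering $\calS$. At an internal node the algorithm accepts only because some recursive call STAB$(\calS - I(l'), k-1)$ accepted, so inductively $\calS - I(l')$ is stabbed by at most $k-1$ lines and adjoining $l'$ yields a stabber for $\calS$ of size at most $k$, hence within the $6k$ budget. Since DR removes only squares that any size-$k$ stabber hits simultaneously with a retained representative, replacing $\calS$ by its DR-reduced version in each recursive call preserves the property of being stabbable by the declared number of lines.

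For the completeness direction (stabbable by the allowed number of lines $\Rightarrow$ algorithm accepts), assume an optimum stabber is available with budget at most $k$ in the current call. If STAB locates a relevant line $l$ with $k+1 \le |I(l)| \le 2k+1$, then by the pigeonhole principle some line in the optimum stabber intersects at least two squares of $I(l)$; Lemma~\ref{Lemma:BoundingLemma} shows that this line is dominated by an element $l' \in B(I(l))$, so without loss of generality $l'$ itself lies in the stabber. Then $\calS - I(l')$ admits a stabber of size $k-1$, and the recursive call STAB$(\calS - I(l'), k-1)$ accepts by induction, so the $\mathbf{for}$-loop finds it. If no such $l$ exists, Lemma~\ref{Lemma:IntermediateLemma} forces every relevant line to stab at most $k$ squares; combined with DR this bounds $|\calS|$ by $O(k^2)$, and SOLVE's exhaustive enumeration over all $k$-subsets of the $O(k^2)$ relevant lines correctly decides whether the claimed stabber exists.

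The main obstacle will be handling the SOLVE branch cleanly, namely, arguing that whenever the branching rule fails to trigger, the DR-reduced instance is either small enough to enumerate or provably infeasible. Concretely: no line stabs more than $k$ squares, so an instance with more than $k \cdot k = k^2$ squares is infeasible for a size-$k$ stabber, justifying SOLVE's initial size check; the subsequent brute force over $\binom{O(k^2)}{k}$ line-tuples is an fpt computation. Routine calculations then bundle the branching factor $|B(I(l))| \le 4k+2$ and the recursion depth $k$ into an overall $O(f(k)\cdot n^{O(1)})$ runtime, which yields the fpt-time decision procedure underpinning Theorem~\ref{thm:fpt}(i).
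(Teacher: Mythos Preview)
Your proof is correct and follows essentially the same approach as the paper: both directions rest on the safety of DR, Lemma~\ref{Lemma:BoundingLemma} for the branching step, and Lemma~\ref{Lemma:IntermediateLemma} (in contrapositive form) to justify that when no branching line is found every line stabs at most $k$ squares, yielding the $k^2$ kernel. Your write-up is in fact more explicit than the paper's about the induction on recursion depth and about why the SOLVE case is exhaustive.

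One small remark: the ``$6k$'' in the lemma statement is evidently a typo carried over from the hardness section (the algorithm's parameter is $k$, not $6k$), and you rightly argue for the intended bound $k$. Your attempt to reconcile the two (``hence within the $6k$ budget'') works for soundness but would not literally give completeness for $6k$; it is cleaner to just note the typo and prove the statement for $k$, as you effectively do.
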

\begin{proof}\ \\
``$\Rightarrow$'': Clearly, if the algorithm accepts, the set can be stabbed by $6k$ lines.\\
``$\Leftarrow$'': If there exists a line $l$ that intersects more than $k$ squares, then by Lemma \ref{Lemma:IntermediateLemma} there is a line $l^*$ with $k+1 \leq I(l^*) \leq 2k + 1$. By Lemma \ref{Lemma:BoundingLemma}, in any solution of size $k$ there must be a line that intersects at least two squares from $I(l^*)$. Further, any such line is dominated by a line in $B(I(l^*))$, and thus, if the set can be stabbed by $k$ lines, at least one of the branches ends up with an instance that can be stabbed by $k-1$ lines.

Otherwise, as mentioned above, we end up with an instance with at most $k^2$ squares left (otherwise we reject), and thus a solution can be found in fpt--time by the brute force algorithm.

\end{proof}
Thus, the algorithm is correct. To roughly determine the running time (a more sophisticated analysis will be given in the next section), observe that each call of the STAB function takes time $n^2$, if we simply calculate all the $I(l)$, and branches on at most $2(2k + 1)$ lines. Each of the branches ends up with a small instance which can be solved in $(4k^2)^k\cdot k^2$ steps, so the total running time is $\order\left((4k + 4)^{3k + 2}n^2\right)$. The algorithm runs in quadratic time for every fixed $k$ and thus is an fpt--algorithm. This completes the proof of Thm. \ref{thm:fpt} (i).

\subsubsection{Generalization}\label{sssec:FPTGeneralized}

A closer look on the above algorithm reveals that it really only depends on two properties of the set to be stabbed:
\begin{itemize}
	\item The squares are of unit size.
	\item A ``large'' set of squares that lie on a line in one direction cannot be intersected by ``few'' lines from another direction.
\end{itemize}
We will formalize these ideas and show how they can be generalized to work for different objects as well as for more than two directions. Thereto, let $\calD$ be a fixed set of directions. For a positive integer $c$, a set of objects is called $c$--shallow with respect to $\calD$, if for any two $\calD$--lines $l$, $l'$ it holds that
\[ |I(l) \cap I(l')| \leq c. \]
E.g., sets of disjoint unit squares with the property that each point lies
in at most $c$ squares are $c$--shallow with respect to \ap lines. Also,
for a \emph{fixed} rectangle $r$, sets of disjoint translates of $r$ are
$\order(1)$--shallow with respect to \ap lines. 
We show that the problem of stabbing $c$--shallow sets of objects that are translates of a connected object is fixed parameter tractable.

Let $\calD = \{d_1, \dots, d_r\}$, where the $d_i$ are lines, and $o$ be a connected object. Observe that it again suffices to consider the $2r \cdot n$ relevant lines that support the boundary of an object.
Given a $c$--shallow set of objects with respect to $\calD$, we first apply a generalized version of the above data reduction rule:\\
\textbf{DR': } 
\emph{Given $\kappa > ck + 1$ objects such that any line in direction $d_i$ intersects either all of them or none, delete all but $ck + 1$ of them.}\\
This data reduction rule is correct, as in the new set there must be a line that intersects $c+1$ of the squares at the same time, and any such line intersects all the $\kappa$ objects.

For two parallel lines $l\co ax + by = c$, $l'\co ax + by = c'$, we define \[ l < l' \co\iff c < c'. \] As the objects are closed, the functions 
\[ \max_{d}(s) := \max\{l \mid l \text{ is a $\{d\}$--line}, s \in I(l)\}\]
and 
\[ \min_{d}(s) := \min\{l \mid l \text{ is a $\{d\}$--line}, s \in I(l)\} \] 
are defined. Again, we can bound the number of lines to chose from:
\begin{lemma}\label{Lemma:BranchingNew} Let $l$ be a line in direction $d_i$ that intersects $\kappa > ck$ objects. Then in any solution of size $k$ there must be a line $l^*$ parallel to $l$ intersecting at least $c + 1$ of the objects. This line can be chosen from the set
\[ B(I(l)) := \left\{ \textnormal{max}_{d}(s) \mid s \in I(l) \right\} \cup \left\{ \textnormal{min}_{d}(s) \mid s \in I(l) \right\} \]
\end{lemma}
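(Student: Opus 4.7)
The plan is to prove the two halves of the lemma in sequence: first the existence of a line $l^*$ in any size-$k$ solution that is parallel to $l$ and intersects at least $c+1$ objects of $I(l)$, then the observation that we may take $l^*$ from $B(I(l))$.

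For the first part I would fix a solution $L$ of size $k$ and apply pigeonhole: since $|I(l)| = \kappa > ck$, some line $l^\dagger \in L$ must cover strictly more than $c$ objects of $I(l)$, hence at least $c+1$. If $l^\dagger$ had direction different from $d_i$, the $c$-shallow hypothesis would force $|I(l^\dagger) \cap I(l)| \leq c$, a contradiction. Hence $l^\dagger$ is parallel to $l$, establishing the first claim.

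For the second part I show that $l^\dagger$ can be replaced by some $l^* \in B(I(l))$ that dominates it, i.e.\ $I(l^*) \supseteq I(l^\dagger)$; swapping $l^\dagger$ for $l^*$ in $L$ then yields another size-$k$ solution using a line from $B(I(l))$ that still intersects all of $I(l^\dagger) \cap I(l)$. The structural fact I exploit is that all objects in $\calS$ are translates of a single connected object $o$, so in the direction $d_i$ the interval $[\min_{d_i}(s), \max_{d_i}(s)]$ of lines parallel to $l$ meeting $s$ has a common length $\ell$ for every $s$; this is the same uniformity that powered Lemma~\ref{Lemma:BoundingLemma}. Assuming without loss of generality that $l^\dagger$ lies ``above'' $l$ in the scalar ordering of parallel lines, I would pick $s_* \in I(l^\dagger) \cap I(l)$ minimizing $\min_{d_i}(s_*)$ and set $l^* := \max_{d_i}(s_*) \in B(I(l))$; a short computation using the common length $\ell$ yields $l^\dagger \leq l^* = \min_{d_i}(s_*) + \ell$, and the choice of $s_*$ guarantees that every $s \in I(l^\dagger) \cap I(l)$ remains in $I(l^*)$. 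The case $l^\dagger \leq l$ is handled symmetrically with $l^* := \min_{d_i}(s^*)$ for the maximizer $s^* \in I(l^\dagger) \cap I(l)$.

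I expect the main subtle step to be the dominance check for the ``extra'' objects in $I(l^\dagger) \setminus I(l)$: any such $s$ must lie strictly on the same side of $l$ as $l^\dagger$ (otherwise, since the set of parallel lines meeting $s$ is an interval, $l$ would already cross $s$), so $\min_{d_i}(s) > l \geq \min_{d_i}(s_*)$, which together with $\min_{d_i}(s) \leq l^\dagger \leq l^*$ forces $s \in I(l^*)$. This is where the uniform-length property of the intervals $[\min_{d_i}(s), \max_{d_i}(s)]$ is essential, and where a naive choice (for instance $\min_{d_i}(s_*)$ in place of $\max_{d_i}(s_*)$) would drop exactly these extra objects and fail.
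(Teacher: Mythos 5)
Your proof is correct and follows essentially the same route as the paper: pigeonhole for existence, $c$-shallowness to force the line to be parallel to $l$, and domination by a boundary-supporting line from $B(I(l))$. The only difference is that the paper dispatches the domination step by citing ``the same arguing as in Lemma~\ref{Lemma:BoundingLemma}'', whereas you carry out that argument explicitly (via the uniform interval length $\ell$ and the choice of the extremal object $s_*$), which is a faithful and complete expansion of what the paper leaves implicit.
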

\begin{proof}
By rotating the entire set we can assume that $l$ is horizontal. Because of the pigeonhole principle there must be a line intersecting at least $c+1$ objects. No line not parallel to $l$ can intersect more than $c$ of the objects, for otherwise the set would not be $c$--shallow, thus in any solution of size $k$ there must be a line parallel to $l$. As the objects are all of the same size, by the same arguing as in Lemma \ref{Lemma:BoundingLemma}, any such line is dominated by a line from $B(I(l))$. 
\end{proof}

Also, similar to the above reasoning, if there exists a line for a DR'--set that intersects more than $2ck + 1$ objects, there must also be a parallel line $l^*$ with $ck + 1 \leq |I(l^*)| \leq 2ck + 1$. Thus, we can simply adapt the algorithm to the new bounds. We now apply, in each call of the STAB function, the new data recuction rule DR', and find a line $l$ with $ck + 1 \leq |I(l)| \leq 2ck + 1$, if it exists. Lemma \ref{Lemma:BranchingNew} ensures that it suffices to branch on the lines in $B(I(l))$. Thus, this algorithm accepts if and only if the set can be stabbed by $k$ $\calD$--lines.

\subsection{Running Time Analysis}
To analyze the running time, we split the algorithm into its three main steps and calculate them independently.
\paragraph{Data Reduction.}
The data reduction step can be done in time $O\left( r(n \log n) \right)$: First, we pick one of the $r$ directions and sort the objects according to this direction. Then we go through the array and delete all but $ck + 1$ out of each $\kappa > ck + 1$ have the same coordinates according to the direction (this takes only linear time). After that, we proceed with the next direction.
\paragraph{Call of the STAB--procedure.}
To find a line that stabs the desired number of objects, we again first pick one of the $r$ directions and sort the objects according to this direction. As they are connected, each of the objects implies two lines in each direction. For all of the $r\cdot 2n$ lines $l$ we then calculate whether $|I(l)| > 2ck + 1$. This requires $O(\log n + (2ck + 2))$ time by using binary search. As we have to do this at most $r$ times, it takes $O\left( r (n \log n + ck) \right)$ steps in total.
\paragraph{Solving the Problem Kernel.}
Let $m$ be the number of objects left. We reject the kernel if $m > ck^2$, as no line stabs more than $ck$ of them. 
Otherwise we can, instead of trying all of the $\approx m^k$ subsets of size $k$, use the following observation. Let $L(o)$ be the set of relevant lines through object $o$. By double--counting we get that
\[ 2r \cdot m \cdot ck \geq \sum_{\text{line }l}|I(l)| = \sum_{\text{object } o}|L(o)| \geq m \cdot \min_o|L(o)| \]
which yields $\min_o|L(o)| \leq 2rck$, and such an objects can be found in time $O(2r(ck)^2)$. Through any object there must be at least one line, so by branching on all the $2rck$ lines a solution is found if it exists. Thus, the kernel can be solved in time $O((2rck)^{k + 2})$.
\paragraph{Total Running Time.}
The algorithm branches on at most $2ck$ possibilities at most $k$ times, each step takes $O\left((2rck)^{k+2} + r \cdot n \log n\right)$, thus the total running time is
\[ O\left((2rck)^{2k+2} \cdot r n \log n) \right). \]
Thereby we have shown Thm. \ref{thm:fpt} (ii).


\section{Stabbing balls with one line}\label{balls_one_line}
We show that the problem of stabbing unit balls in $\Rd$ with a line is \wone-hard with respect to $d$ by an fpt-reduction from 
the \wone-complete $k$-independent set problem is general graphs~\cite{DF99}.
The reduction is based on a technique by Cabello et al.~\cite{CGKR08, cgkmr-gcfpt-09}. 
Given an undirected graph $G([n], E)$ we construct a set $\mathcal{B}$ of 
balls of equal radius $r$ in $\R^{2k}$ such that 
$\mathcal{B}$ can be stabbed by a line if and only if $G$ has an independent set of size $k$. First, we construct of a \emph{scaffolding}
set of balls that restricts the solutions to $n^k$ combinatorially
different solutions, which can be interpreted as potential
$k$-independent sets. Additional \emph{constraint} balls
will then encode the edges of the input graph.

The geometry of the construction
will be described as if
exact square roots and expressions of the form $\sin\frac{\pi}{n}$
were available.  To make the reduction suitable for the Turing machine
model, the data must be perturbed using fixed-precision roundings.  This
can be done with polynomially many bits in a way similar to the
rounding procedure followed in~\cite{CGKR08, cgkmr-gcfpt-09}. (We omit these technical details here).

\paragraph{Preliminaries.}
For every ball $B\in\mathcal{B}$ we will also have $-B\in\mathcal{B}$. This allows us to restrict our attention to lines 
through the origin: a line that stabs $\mathcal{B}$ can be translated so that it goes through 
the origin and still stabs $\mathcal{B}$. In this section, by a line we always mean a line through the origin. 
For a line $l$, let $\vec{l}$ be its unit direction vector. 
The notions of a point and vector will be used interchangeably.

It will be convenient to
view $\Reals^{2k}$ as the product of $k$ orthogonal planes $E_1,\dots, E_k$,
where each $E_i$ has coordinate axes $X_i,Y_i$. The origin is denoted by $o$.
The coordinates of a point $p\in\R^{2k}$ are denoted by $\left(x_1(p), y_1(p),\ldots ,x_k(p), y_k(p)\right)$.
We denote by $C_i$ the unit circle on $E_i$ centered at $o$.

\subsection{Scaffolding ball set}
\label{scaff}
For each plane $E_i$, we define $2n$ $2k$-dimensional balls, 
whose centers $c_{i1},\ldots ,c_{i2n}$ are regularly spaced on the circle $C_i$.
Let $c_{iu}\in E_i$ be the center of the ball $B_{iu}$, $u\in[2n]$, with  
$$
x_i(c_{iu})=\cos(u-1)\tfrac{\pi}{n},\; y_i(c_{iu})=\sin(u-1)\tfrac{\pi}{n}.
$$
We define the scaffolding ball set $\mathcal{B}^0=\{B_{iu},\, i=1,\ldots,k \;\mathrm{and}\; u=1,\ldots,2n\}$. 
We have $|\mathcal{B}^0|=2nk$. All balls in $\mathcal{B}^0$ will have the same radius $r<1$, to be defined later.

Two antipodal balls $B$, $-B$ are stabbed by the same set of lines.
A line $l$ stabs a ball $B$ of radius $r$ and center $c$ if and only if $(c\cdot \vec{l})^2\geq \lVert c\rVert^2-r^2$. 
Thus, $l$ stabs $\mathcal{B}^0$ if and only if it satisfies the following system of $nk$ inequalities:
\begin{equation}
\nonumber 
(c_{iu}\cdot \vec{l})^2\geq \lVert c_{iu}\rVert^2-r^2=1-r^2, 
\;\;\text{for}\;\;
i=1,\ldots,k
\;\;\text{and}\;\;
u=1,\ldots,n. 
\end{equation}

Consider the inequality asserting that $l$ stabs $B_{iu}$.
Geometrically, it amounts to saying that
the projection $\vec{l}_i$ of $\vec{l}$ on the plane $E_i$ lies in one of the half-planes
\begin{equation*}
H_{iu}^+=\{p\in E_i | c_{iu}\cdot p \geq \sqrt{\lVert c_{iu}\rVert^2-r^2}\} 
\;\;\mathrm{or}\;\; 
H_{iu}^-=\{p\in E_i | c_{iu}\cdot p \leq -\sqrt{\lVert c_{iu}\rVert^2-r^2}\}.
\end{equation*}
Consider the situation on a plane $E_i$.
Looking at all half-planes $H_{i1}^+, H_{i1}^-,\ldots,H_{in}^+, H_{in}^-$, we see that 
$l$ stabs all balls $B_{iu}$ (centered on $E_i$) if and and only if $\vec{l}_i$ lies in one of the $2n$ wedges
$\pm (H_{i1}^- \cap H_{i2}^+),\ldots, \pm (H_{i(n-1)}^- \cap H_{in}^+), \pm (H_{i1}^- \cap H_{in}^-)$; 
see Fig.~\ref{wedges}. 
\begin{figure}
  \centering
	 \includegraphics[width=7.5cm]{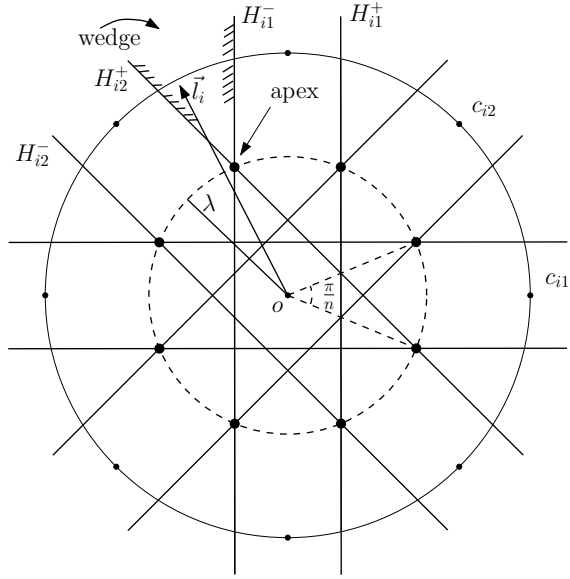}
	 \caption{Centers of the balls and their respective half-planes and wedges on a plane $E_i$, for $n=4$.}
	 \label{wedges}
\end{figure}
The apices of the wedges are regularly spaced on a circle of radius 
$\lambda=\sqrt{2(1-r^2)/(1-\cos\frac{\pi}{n})}$, 
and define the set 
\begin{equation*}
A_i=\{\pm \left(\lambda \cos(2u-1)\tfrac{\pi}{2n}, \lambda \sin(2u-1)\tfrac{\pi}{2n}\right)\in E_i,\, u=1,\ldots,n\}.
\end{equation*}
For $l$ to stab all balls $B_{iu}$, we must have that $\lVert \vec{l}_i\rVert\geq\lambda$.
We choose $r=\sqrt{1-(1-\cos\frac{\pi}{n})/(2k)}$ in order to obtain $\lambda=1/\sqrt{k}$.  

Since the above hold for every plane $E_i$, and since $\vec{l}\in\R^{2k}$ is a unit vector, 
we have
$$
1=\lVert l\rVert^2=\lVert l_1\rVert^2 +\cdots +\lVert l_k\rVert^2 \geq k\lambda^2 = 1.
$$
 Hence, equality holds throughout, which implies that $\lVert\vec{l}_i\rVert=1/\sqrt{k}$, for every $i\in\{1,\ldots,k\}$. Hence, for 
line $l$ to stab all balls in $\mathcal{B}^0$, every projection $\vec{l}_i$ must be one 
of the $2n$ apices in $A_i$.  Each projection $\vec{l}_i$ can be chosen independently. There are $2n$ choices, but since $\vec{l}$ and 
$-\vec{l}$ correspond to the same line, the total number of lines that stab $\mathcal{B}^0$ is $n^k 2^{k-1}$. 

For a tuple $(u_1,\ldots,u_k)\in[2n]^k$, we will denote by $l(u_1,\ldots, u_k)$ the stabbing line with direction vector
\begin{equation*}
\frac{1}{\sqrt{k}}\left(\cos(2u_1-1)\tfrac{\pi}{2n}, \sin(2u_1-1)\tfrac{\pi}{2n}, \ldots, 
\cos(2u_k-1)\tfrac{\pi}{2n}, \sin(2u_k-1)\tfrac{\pi}{2n}\right).
\end{equation*}
Two lines $l(u_1,u_2,...,u_k)$
and $l(v_1,v_2,...,v_k)$ are said to be equivalent
if $u_i \equiv v_i\pmod n$, for all $i$. 
This relation defines $n^k$ equivalence classes $L(u_1,\ldots,u_k)$, with $(u_1,\ldots,u_k)\in[n]^k$, 
where each class consists of $2^{k-1}$ lines. 

From the discussion above, it is clear that there is a bijection between the possible equivalence classes of lines that 
stab $\mathcal{B}^0$ and $[n]^k$. 
\subsection{Constraint balls}

We continue the construction of the ball set $\mathcal{B}$ by showing how
to encode the structure of~$G$.  For each pair of distinct indices $i\neq
j$ ($1\leq i,j \leq k$) and for each pair of (possibly equal) vertices
$u,v\in[n]$, we define a \emph{constraint set} $\mathcal{B}_{ij}^{uv}$ of balls
with the property that (all lines in) all classes $L(u_1,\ldots,u_k)$ stab
$\mathcal{B}_{ij}^{uv}$ except those with $u_i=u$ and $u_j=v$. The centers
of the balls in $\mathcal{B}_{ij}^{uv}$ lie in the $4$-space $E_i\times
E_j$. Observe that all lines in a particular class $L(u_1,\ldots,u_k)$ project onto
only two lines on $E_i\times E_j$.
We use a ball $B_{ij}^{uv}$ (to be defined shortly) of radius~$r$ that is stabbed by \emph{all} lines $l(u_1,\ldots,u_k)$ except those 
with $u_i=u$ and $u_j=v$.
Similarly, we use a ball $B_{ij}^{u\bar v}$ that is stabbed by \emph{all} lines $l(u_1,\ldots,u_k)$ except those 
with $u_i=u$ and $u_j=\bar v$, where $\bar v=v+n$. 
Our constraint set consists then of the four balls
$$\mathcal{B}_{ij}^{uv}=\{\pm B_{ij}^{uv}, \pm B_{ij}^{u\bar v}\}.$$

We describe now the placement of a ball $B_{ij}^{uv}$. Consider a line $l=l(u_1,\ldots,u_k)$ with 
$u_i=u$ and $u_j=v$. The center $c_{ij}^{uv}$ of $B_{ij}^{uv}$ will lie on a line ${z}\in E_i\times E_j$ that is
orthogonal to~$\vec{l}$, but not orthogonal to any line $l(u_1,\ldots,u_k)$ with 
$u_i\neq u$ or $u_j\neq v$. We choose the direction
$\vec{z}$ of $z$ as follows: 
$$x_i(\vec{z})=\mu(\cos\theta_i-3n\sin\theta_i),\, 
y_i(\vec{z})=\mu(\sin\theta_i+3n\cos\theta_i),$$ 
$$x_j(\vec{z})=\mu(-\cos\theta_j-6n^2\sin\theta_j),\, 
y_j(\vec{z})=\mu(-\sin\theta_j+6n^2\cos\theta_j),$$
where $\theta_i=(2u-1)\frac{\pi}{2n}$, $\theta_j=(2u-1)\frac{\pi}{2n}$, and $\mu=1/(9n^2+36n^4+2)$. 
It is straightforward to check that $\vec{l}\cdot\vec{z}=0$.

Let $\omega$ 
be the angle between $\vec{l'}$ and $\vec{z}$. 
We have the following lemma: 
\begin{lemma}
\label{omega_bound}
For any line $l'=l(u_1,\ldots,u_k)$, with $u_i\neq u$ or $u_j\neq v$ the angle $\omega$ 
between $\vec{l'}$ and $\vec{z}$ satisfies $|\cos\omega|>\frac{\mu}{\sqrt{k}}$.
\end{lemma}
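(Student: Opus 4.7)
The plan is to compute $\cos\omega$ directly from $\cos\omega=(\vec{l'}\cdot\vec{z})/\|\vec{z}\|$ (since $\|\vec{l'}\|=1$) and reduce the claim to bounding a simple trigonometric sum.

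First I would expand $\vec{l'}\cdot\vec{z}$. Because $\vec{z}$ has non-zero components only in the planes $E_i$ and $E_j$, only the projections $\vec{l'}_i,\vec{l'}_j$ contribute. Writing $\theta_i'=(2u_i'-1)\tfrac{\pi}{2n}$, $\theta_i=(2u-1)\tfrac{\pi}{2n}$, $\theta_j'=(2u_j'-1)\tfrac{\pi}{2n}$, $\theta_j=(2v-1)\tfrac{\pi}{2n}$, and using $\cos\theta'\cos\theta+\sin\theta'\sin\theta=\cos(\theta'-\theta)$ and $\sin\theta'\cos\theta-\cos\theta'\sin\theta=\sin(\theta'-\theta)$, a direct computation gives
\[
\vec{l'}\cdot\vec{z}\;=\;\frac{\mu}{\sqrt{k}}\bigl[\cos\alpha_i+3n\sin\alpha_i-\cos\beta_j+6n^2\sin\beta_j\bigr]\;=:\;\frac{\mu}{\sqrt{k}}\,S,
\]
where $\alpha_i=\theta_i'-\theta_i=(u_i'-u)\tfrac{\pi}{n}$ and $\beta_j=\theta_j'-\theta_j=(u_j'-v)\tfrac{\pi}{n}$. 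In particular $\vec{z}\cdot\vec{l}=0$ since the reference line has $\alpha_i=\beta_j=0$.

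Next I would compute $\|\vec{z}\|^2=\mu^2\bigl[(\cos\theta_i-3n\sin\theta_i)^2+(\sin\theta_i+3n\cos\theta_i)^2+(\cos\theta_j+6n^2\sin\theta_j)^2+(\sin\theta_j-6n^2\cos\theta_j)^2\bigr]=\mu^2(2+9n^2+36n^4)=\mu$, by the choice $\mu=1/(9n^2+36n^4+2)$, so $\|\vec{z}\|=\sqrt{\mu}$. Therefore
\[
|\cos\omega|\;=\;\frac{\sqrt{\mu}}{\sqrt{k}}\,|S|,
\]
and the lemma is equivalent to $|S|>\sqrt{\mu}$; since $\sqrt{\mu}<1$, it will suffice to prove $|S|\ge 4$.

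Finally I would carry out a short case analysis driven by the hypothesis that $u_i'\not\equiv u\pmod{n}$ or $u_j'\not\equiv v\pmod{n}$. Note that $\alpha_i,\beta_j$ are integer multiples of $\pi/n$, so $\sin\alpha_i=0$ exactly when $u_i'\equiv u\pmod n$ and $\sin\beta_j=0$ exactly when $u_j'\equiv v\pmod n$; when non-zero, the standard bound $\sin x\ge (2/\pi)x$ on $[0,\pi/2]$ gives $|\sin\alpha_i|,|\sin\beta_j|\ge\sin(\pi/n)\ge 2/n$. In the case $u_j'\not\equiv v\pmod n$ the dominating term $6n^2\sin\beta_j$ has magnitude $\ge 12n$, while the remaining three terms of $S$ have absolute values bounded by $1+3n+1=3n+2$, so $|S|\ge 9n-2\ge 4$. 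In the remaining case $u_j'\equiv v\pmod n$ but $u_i'\not\equiv u\pmod n$, we have $\sin\beta_j=0$ and $|\cos\beta_j|=1$, while $|3n\sin\alpha_i|\ge 6$ and $|\cos\alpha_i-\cos\beta_j|\le 2$, giving $|S|\ge 4$. In both cases $|S|>\sqrt{\mu}$ and the bound follows.

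The main obstacle is the choice of scales in $\vec{z}$: the coefficients $3n$ (in $E_i$) and $6n^2$ (in $E_j$) must be separated sharply enough that a $j$-misalignment always dominates any combination of the other three summands of $S$, and an $i$-misalignment (when $j$ is aligned) likewise dominates the remaining $O(1)$ terms. Once these scale separations are set and the identity $\|\vec{z}\|^2=\mu$ is verified, everything else reduces to routine trigonometric bookkeeping.
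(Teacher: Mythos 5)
Your computation and overall strategy coincide with the paper's: expand $\vec{l'}\cdot\vec{z}$ into the trigonometric sum $S$ (the paper's $\alpha$, after its WLOG normalization $u=v=1$), then do a case analysis on which of the two sine terms can vanish, using $|\sin((u_i'-u)\pi/n)|\geq\sin(\pi/n)$ and the scale separation $1\ll 3n\ll 6n^2$. Two points are worth recording. First, you handle the normalization of $\vec z$ more carefully than the paper does: since $\lVert\vec z\rVert^2=\mu^2(2+9n^2+36n^4)=\mu$, the correct identity is $|\cos\omega|=\frac{\sqrt{\mu}}{\sqrt{k}}|S|$, not $\frac{\mu}{\sqrt{k}}|S|$ as written in the paper (which tacitly treats $\vec z$ as a unit vector); your reduction of the lemma to $|S|>\sqrt{\mu}$ is the right one, and your bound $|S|\geq 4$ comfortably exceeds both thresholds.

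Second, and this is the one genuine issue: you silently replaced the hypothesis ``$u_i\neq u$ or $u_j\neq v$'' by ``$u_i\not\equiv u$ or $u_j\not\equiv v \pmod n$''. Your two cases are exhaustive only for the strengthened hypothesis, so the pairs $(u_i,u_j)\in\{(u,v+n),(u+n,v),(u+n,v+n)\}$ — all of which satisfy the lemma's stated hypothesis — are not covered. For the first two one gets $\sin\alpha_i=\sin\beta_j=0$ and $S=\pm 2$, so the bound still holds and the fix is one line; but for $(u+n,v+n)$ one gets $S=0$, i.e.\ that line is also orthogonal to $\vec z$ and the lemma as literally stated is \emph{false} there. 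The paper's proof quietly excludes exactly this pair (its conditions $(u_i,u_j)\neq(1,1)$ and $(u_i,u_j)\neq(n+1,n+1)$), and the reduction only ever needs the mod-$n$ version of the statement, since a line with $u_i\equiv u$ and $u_j\equiv v\pmod n$ belongs to the class being excluded anyway. So your hypothesis is the one the construction actually requires, but you should state the replacement explicitly (and dispose of the two harmless leftover pairs) rather than making it tacitly; as a proof of the lemma verbatim, the case analysis is incomplete.
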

\begin{proof}
Without loss of generality we consider a fixed direction $\vec{z}$ where $\theta_i=\theta_j=\frac{\pi}{2n}$ (i.\,e., $u=v=1$).
Consider $\vec{l'}$ with $x_i(\vec{l'})=\cos\theta$, $y_i(\vec{l'})=\sin\theta$, $x_j(\vec{l'})=\cos\phi$, and 
$y_j(\vec{l'})=\sin\phi$, where $\theta=(2u_i-1)\frac{\pi}{2n}$ and $\phi=(2u_j-1)\frac{\pi}{2n}$, with $(u_i,u_j)\neq (1,1)$ and 
$(u_i,u_j)\neq (n+1, n+1)$. After straightforward calculations we have that
$|\cos\omega|=|\vec{l'}\cdot\vec{z}|=\frac{\mu}{\sqrt{k}}|\alpha|$, where 
$$\alpha=\cos(u_i-1)\tfrac{\pi}{n} + 3n\sin(u_i-1)\tfrac{\pi}{n} - \cos(u_j-1)\tfrac{\pi}{n} + 6n^2\sin(u_j-1)\tfrac{\pi}{n}.$$
We will show that $|\alpha|>1$. We will use the inequality:
$$|\sin(u_i-1)\tfrac{\pi}{n}|\geq|\sin\tfrac{\pi}{n}|>\tfrac{1}{n},$$
which holds for all $1\leq u_i\leq 2n$, with $u_i\neq 1$, $u_i\neq n+1$, and $n\geq 4$.
We examine the following cases:

(i) $u_j\neq 1$ and $u_j\neq n+1$. Then $u_i$ can take any value.
We have 
\begin{align*}
|\alpha| &\geq
\left||6n^2\sin(u_j-1)\frac{\pi}{n}|-|\cos(u_j-1)\frac{\pi}{n} - \cos(u_i-1)\frac{\pi}{n} - 3n\sin(u_i-1)\frac{\pi}{n}|\right|
\\&
>|6n^2\cdot \tfrac{1}{n} - |2+3n||
\\&
=3n-2 > 1.
\end{align*}

(ii) $u_j=1$. Then $u_i\neq 1$. If also $u_i\neq n+1$, we have
\begin{align*}
|\alpha| &\geq
|0-1+3n\sin(u_i-1)\frac{\pi}{n} + \cos(u_i-1)\frac{\pi}{n}|
\\&
>|-1+3n\cdot \frac{1}{n}-1| = 1.
\end{align*} 
If $u_i=n+1$, then $|\alpha|=2$.

(iii) $u_j=n+1$. Then $u_i\neq n+1$. The two cases where $u_i\neq 1$ or $u_i=1$ are dealt with similarly to the previous case.
\end{proof}
This lower bound on $|\cos\omega|$ helps us 
place $B_{ij}^{uv}$ sufficiently close to the origin so that it is still intersected by $l'$, i.\,e., 
$\vec{l'}$ lies in one of the half-spaces 
$c_{ij}^{uv}\cdot p\geq \sqrt{\lVert c_{ij}^{uv}\rVert^2-r^2}$ or $c_{ij}^{uv}\cdot p\leq -\sqrt{\lVert c_{ij}^{uv}\rVert^2-r^2}$, 
$p\in\R^{2k}$.

We claim that any point $c_{ij}^{uv}$ on $z$ with $r<\lVert c_{ij}^{uv}\rVert<\sqrt{\frac{k}{k-\mu^2}}r$ will do.
For any position of $c_{ij}^{uv}$ on $z$ with $\lVert c_{ij}^{uv}\rVert>r$, we have 
$(c_{ij}^{uv}\cdot \vec{l})^2=0<\lVert c_{ij}^{uv}\rVert^2-r^2$, 
i.\,e., $l$ does not stab $B_{ij}^{uv}$. 
On the other hand, as argued above we need that $|c_{ij}^{uv}\cdot \vec{l'}|\geq \sqrt{\lVert c_{ij}^{uv}\rVert^2-r^2}$.
Since $c_{ij}^{uv}\cdot \vec{l'}=\cos\omega\cdot \lVert c_{ij}^{uv}\rVert$, 
we have the condition $|\cos\omega|\geq\sqrt{1-\frac{r^2}{\lVert c_{ij}^{uv}\rVert^2}}$. By Lemma~\ref{omega_bound} we know that 
$|\cos\omega|>\frac{\mu}{\sqrt{k}}$, hence by choosing $\lVert c_{ij}^{uv}\rVert$ 
so that $\frac{\mu}{\sqrt{k}}>\sqrt{1-\frac{r^2}{\lVert c_{ij}^{uv}\rVert^2}}$ 
we are done.
\paragraph{Reduction.}
Similarly to~\cite{CGKR08}, the structure of the input graph $G([n], E)$ can now be represented as follows.
We add to $\mathcal{B}^0$ the $4n\binom k 2$ balls in 
$\mathcal{B}_V=\bigcup \mathcal{B}_{ij}^{uu},\, 1\leq u\leq n, \, 1\leq i<j\leq k$, to ensure that 
all components $u_i$ in a solution (class of lines $L(u_1,\ldots,u_k)$) are distinct. 
For each edge $uv\in E$ we also add the balls in $k(k-1)$ sets $\mathcal{B}_{ij}^{uv}$, with $i\neq j$.
This ensures that the remaining classes of lines $L(u_1,\ldots,u_k)$ represent independent sets of size $k$.
In total, the edges are represented by the $4k(k-1)|E|$ balls in 
$\mathcal{B}_E=\bigcup \mathcal{B}_{ij}^{uv},\, uv\in E,\, 1\leq i,j\leq k,\, i\neq j$.
The final set $\mathcal{B}=\mathcal{B}^0\cup\mathcal{B}_V\cup\mathcal{B}_E$ has $2nk+4\binom k 2 (n+2|E|)$ balls.

As noted in above, there is a bijection between the possible equivalence classes of lines $L(u_1,\ldots,u_k)$ that 
stab $\mathcal{B}$ and the tuples $(u_1,\ldots,u_k)\in [n]^k$. The constraint sets of balls exclude tuples with two equal indices 
$u_i=u_j$ or with indices $u_i$, $u_j$ when $u_iu_j\in E$, thus, the classes of lines that stab $B$ represent exactly the independent 
sets of $G$. Thus, we have the following:
\begin{lemma}
Set $\mathcal{B}$ can be stabbed by a line if an only if $G$ has an independent set of size $k$.
\end{lemma}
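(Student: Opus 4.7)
The plan is to verify both directions by chaining together the structural facts established earlier: (a) stabbing $\mathcal{B}^0$ forces a line through the origin to belong to one of the $n^k$ equivalence classes $L(u_1,\ldots,u_k)$ with $(u_1,\ldots,u_k)\in[n]^k$ (already proved in Section~\ref{scaff}), and (b) each constraint set $\mathcal{B}_{ij}^{uv}$ is stabbed by every such class except $L(\ldots,u_i{=}u,\ldots,u_j{=}v,\ldots)$ (the defining property established via Lemma~\ref{omega_bound} and the radius bound). Given these, the lemma becomes bookkeeping.

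For the forward direction, I would start from a stabbing line $l$. Since any line that stabs $\mathcal{B}$ can be translated through the origin (the set is centrally symmetric because for every $B\in\mathcal{B}$ we also have $-B\in\mathcal{B}$), and since $\mathcal{B}^0\subseteq\mathcal{B}$, $l$ lies in some class $L(u_1,\ldots,u_k)$. Because $l$ also stabs $\mathcal{B}_V$, for every $1\le i<j\le k$ and every $u\in[n]$ the set $\mathcal{B}_{ij}^{uu}$ is stabbed, which rules out $u_i=u_j=u$; thus $u_1,\ldots,u_k$ are pairwise distinct elements of $[n]$. Because $l$ stabs $\mathcal{B}_E$, for every edge $uv\in E$ and every ordered pair of indices $(i,j)$ with $i\neq j$, the set $\mathcal{B}_{ij}^{uv}$ is stabbed, which rules out $(u_i,u_j)=(u,v)$; applying this to both orderings $(i,j)$ and $(j,i)$ (using the corresponding constraint sets $\mathcal{B}_{ij}^{uv}$ and $\mathcal{B}_{ji}^{uv}$) shows that no two of $u_1,\ldots,u_k$ are adjacent in $G$. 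Hence $\{u_1,\ldots,u_k\}$ is an independent set of size~$k$.

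For the reverse direction, suppose $\{v_1,\ldots,v_k\}\subseteq[n]$ is an independent set of size $k$ in $G$. I would pick any line $l\in L(v_1,\ldots,v_k)$ (for instance, $l(v_1,\ldots,v_k)$ itself). By construction $l$ stabs $\mathcal{B}^0$. For each pair $i<j$ and each $u\in[n]$ we have $(v_i,v_j)\neq(u,u)$ because the $v_t$ are distinct, so by the defining property $l$ stabs $\mathcal{B}_{ij}^{uu}$; thus $l$ stabs all of $\mathcal{B}_V$. For each edge $uv\in E$ and each ordered pair $i\neq j$ we have $(v_i,v_j)\neq(u,v)$ because $\{v_i,v_j\}$ is non-adjacent in $G$, so $l$ stabs $\mathcal{B}_{ij}^{uv}$; thus $l$ stabs all of $\mathcal{B}_E$. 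Combining, $l$ stabs $\mathcal{B}=\mathcal{B}^0\cup\mathcal{B}_V\cup\mathcal{B}_E$.

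There is no real obstacle here since all of the geometric work (the scaffolding analysis, the exact exclusion property of the constraint balls via Lemma~\ref{omega_bound}) has already been done; the only mild subtlety is making sure that both vertex orderings of each undirected edge are excluded, which is why one iterates over all \emph{ordered} pairs $i\neq j$ of coordinate planes in the definition of $\mathcal{B}_E$. I would state this observation explicitly in the proof to justify the symmetry argument.
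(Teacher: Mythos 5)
Your proof is correct and follows essentially the same route as the paper, which establishes the bijection between the equivalence classes of stabbing lines for $\mathcal{B}^0$ and tuples in $[n]^k$ and then observes that the constraint sets $\mathcal{B}_V$ and $\mathcal{B}_E$ exclude exactly the tuples with repeated or adjacent vertices. Your write-up is in fact more explicit than the paper's (which compresses the argument into the paragraph preceding the lemma), and your remark about using both ordered pairs $(i,j)$ and $(j,i)$ to exclude both orientations of an undirected edge correctly accounts for why $\mathcal{B}_E$ ranges over all $k(k-1)$ ordered index pairs.
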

From this lemma and since this is an fpt-reduction, Theorem~\ref{thm:balls} follows.


\bibliographystyle{abbrv}
\bibliography{stabbing}

\end{document}